\def\BibTeX{{\rm B\kern-.05em{\sc i\kern-.025em b}\kern-.08em
    T\kern-.1667em\lower.7ex\hbox{E}\kern-.125emX}}
\newcommand{\bdr}{\bm{r}}
\newcommand{\bdw}{\bm{w}}
\newcommand{\alg}{IAS-II } 
\begin{document}

\title{A Convergent Primal-Dual Algorithm for Computing Rate-Distortion-Perception Functions
}


\author{Chunhui Chen\textsuperscript{*} \and
  Linyi Chen\textsuperscript{*} \and
        Xueyan Niu \and
        Hao Wu \textsuperscript{\Letter} 
}

\institute{
* These authors contributed equally to this work.\\[1ex]
Chunhui Chen  \at
              Department of Mathematical Sciences, Tsinghua University, Beijing, China 
           \and
Linyi Chen \at
              Department of Mathematical Sciences, Tsinghua University, Beijing, China
            \and
Xueyan Niu \at
            Theory Lab, 2012 Labs, Huawei Technologies Co., Ltd., Beijing, China
            \and
\Letter $\;$   Hao Wu \at
    Department of Mathematical Sciences, Tsinghua University, Beijing, China \\
    \email{hwu@mail.tsinghua.edu.cn}
}

\date{Received: date / Accepted: date}

\maketitle

\begin{abstract}
Recent advances in Rate-Distortion-Perception (RDP) theory highlight the importance of balancing compression level, reconstruction quality, and perceptual fidelity. While previous work has explored numerical approaches to approximate the information RDP function, the lack of theoretical guarantees remains a major limitation, especially in the presence of complex perceptual constraints that introduce non-convexity and computational intractability. Inspired by our previous constrained Blahut–Arimoto algorithm for solving the rate-distortion function, in this paper, we present a new theoretical framework for computing the information RDP function by relaxing the constraint on the reconstruction distribution and replacing it with an alternative optimization approach over the reconstruction distribution itself. This reformulation significantly simplifies the optimization and enables a rigorous proof of convergence. Based on this formulation, we develop a novel primal-dual algorithm with provable convergence guarantees. Our analysis establishes, for the first time, a rigorous convergence rate of \( O(1/n) \) for the computation of RDP functions. The proposed method not only bridges a key theoretical gap in the existing literature but also achieves competitive empirical performance in representative settings. These results lay the groundwork for more reliable and interpretable optimization in RDP-constrained compression systems. Experimental results demonstrate the efficiency and accuracy of the proposed algorithm.
\keywords{Optimal transport \and Wasserstein Barycenter \and Rate-distortion-perception trade-off}
\subclass{65K05 \and 90C25 \and 94A34}
\end{abstract}

\section{Introduction}

The interplay among compression level, reconstruction fidelity, and perceptual accuracy has emerged as a pivotal focus in contemporary lossy compression frameworks \cite{blau2018perception,balle2017endtoend,theis2017lossy,blau2019rethinking,zhang2021universal,salehkalaibar2023choice,niu2025rate,chen2025information}. Classical Rate-Distortion (RD) theory, rooted in Shannon’s foundational work \cite{shannon1959coding}, provides a rigorous framework to study the fundamental limits of lossy data compression \cite{DBLP:books/wi/01/CT2001}. However, traditional formulations assume distortion metrics (e.g., mean squared error) that do not align well with human perceptual judgments \cite{wang2004image,blau2018perception}. This mismatch has become increasingly pronounced in high-dimensional sensory data, such as images and video, where perceptual quality often plays a more critical role than pixel-wise accuracy.

Recent advances in machine learning, particularly the emergence of deep generative models (e.g., GANs \cite{goodfellow2014generative}, VAEs \cite{kingma2013auto}, diffusion models \cite{sohl2015deep,song2019generative}), have reinvigorated interest in perceptual-aware compression techniques. These models enable the synthesis of visually plausible reconstructions even at low bit-rates \cite{10827256,jia2024generative}, motivating a shift from fidelity-centric to perception-centric evaluation metrics. In this context, \emph{Rate-Distortion-Perception} (RDP) theory has emerged as a unifying framework to quantify the three-way trade-off between rate, distortion measure, and perceptual divergence \cite{blau2019rethinking,agustsson2019generative,niu2023conditional,zhang2021universal,10619317,chen2025information}. Within this framework, the information RDP functions generalize the classical RD functions by introducing an additional constraint on the perceptual distance -- typically measured using total variation, KL divergence, or Wasserstein distance \cite{zhang2021universal,chen2023computation} -- between the distribution of reconstructions and that of the original source.

Despite the theoretical appeal of RDP, computing the information RDP function remains challenging in practice. This difficulty stems primarily from the perceptual constraint, which induces the non-strict convexity and nonlinearity which break the inherent simplex structure of the RD functions, making it intractable for Blahut–Arimoto (BA) type algorithm. 
Although some recent efforts have adopted adversarial training strategies \cite{blau2019rethinking,zhang2021universal}, these methods usually rely on heuristic procedures and lack rigorous convergence guarantees. Furthermore, perception constraints based on optimal transport distances impose additional computational burdens, as they require the solution of large-scale transport problems within the inner loop of RDP optimization \cite{villani2009wasserstein,cuturi2013sinkhorn}.

Since the communication process can essentially be viewed as a transportation problem, in recent years there has been a growing interest in incorporating optimal transport techniques into information theory. This line of research has achieved remarkable success including the analysis of LM rate \cite{10000926}, RD functions \cite{wu2023communication}, and the information bottleneck framework \cite{10206826}. Motivated by these developments, we extended the optimal transport perspective to the study of RDP functions.
In our previous work, we proposed the Wasserstein Barycenter Model for RDP functions (WBM-RDP), and introduced the Improved Alternating Sinkhorn (IAS) algorithm, which effectively computes RDP functions under different perception measures \cite{chen2023computation}. Although the IAS algorithm exhibited strong empirical performance, its convergence was established only by numerical evidence, without rigorous theoretical guarantees.

In this paper, we address this critical limitation by introducing a novel theoretical framework for computation of the RDP function. Inspired by recent progress in Constrained Blahut–Arimoto (CBA) algorithm \cite{DBLP:journals/tit/Arimoto72,blahut1972computation, chen2023constrained}, we revisit the formulation of the perceptual constraint and propose a key relaxation strategy. Instead of directly imposing hard constraints on the reconstruction distribution, we reformulate the problem by adopting an alternative optimization framework over the reconstruction distribution space. This relaxation preserves the optimal solution to the original problem while offering substantially improved analytical tractability. In particular, when the perception measure is specified as the Wasserstein metric, the reformulation transforms the original Wasserstein Barycenter (WB) model into a relaxed WB formulation.

Using this reformulation, we design a primal-dual optimization algorithm tailored to the relaxed WB structure. Our method enables efficient computation with provable convergence guarantees. To our knowledge, this is the first work that rigorously establishes a convergence rate of $O(1/n)$ for RDP optimization under different perception measures, i.e., Wasserstein metric, total variation (TV) distance, and Kulback-Leibler (KL) divergence. This result closes a long-standing theoretical gap in the literature, where prior methods have largely relied on heuristic numerical schemes without guarantees. In addition, our algorithm exhibits strong empirical performance: numerical experiments demonstrate that it outperforms prior approaches in both computational efficiency and solution quality, while also providing reliable theoretical guarantees.

The remainder of this paper is organized as follows. 
In Section~\ref{sec:RDP}, we formally define the information RDP function and present its discrete formulation under general distortion and perception measures. 
Section~\ref{sec:algorithm} introduces the proposed primal--dual algorithms, with separate designs for the KL divergence and Wasserstein metric cases, leveraging the structural properties of each measure. 
In Section~\ref{sec:convergence}, we provide a rigorous convergence analysis of the algorithms, establishing an \(O(1/n)\) convergence rate for both cases. 
Section~\ref{sec:experiment} reports numerical experiments on binary and Gaussian sources, demonstrating the computational efficiency and precision of the proposed method compared to existing approaches. 
Finally, Section~\ref{sec:conclusion} concludes the paper and discusses potential directions for future research.

\section{The RDP functions} \label{sec:RDP}
Consider a discrete memoryless source $X \in \mathcal{X}$ and a reconstruction $\hat{X} \in \hat{\mathcal{X}}$, where $\mathcal{X}=\{x_1,\cdots, x_M\},\hat{\mathcal{X}}=\{\hat{x}_1,\cdots, \hat{x}_N\}$ are finite alphabets. 
Suppose $p_X$ and $p_{\hat{X}}$ are defined in the probability space $(\mathcal{X}, \mathcal{F}, \mathbb{P}).$ We consider the single-letter distortion measure
$\Delta: \mathcal{X}\times \hat{\mathcal{X}}\mapsto [0,\infty)$ and the perception measure between the distributions
$d:\mathbb{P}\times \mathbb{P}\mapsto [0,\infty).$

\begin{definition}[The information RDP function]\label{definition1}
Given a distortion fidelity $D$ and a perception quality $P$, the information RDP function is defined as\par 
\begin{subequations}\label{eq0}
\begin{align}
R(D, P)= \min _{p_{\hat{X} \mid X}} \quad& I(X, \hat{X}) \label{eq0_a} \vspace{1ex} \\
 \text { s.t. }\quad &\mathbb{E}[\Delta(X, \hat{X})] \leq D,\label{eq0_b} \vspace{1ex} \\  
 &  d\left(p_X, p_{\hat{X}}\right) \leq P, \label{eq0_c}
\end{align}
\end{subequations}%
where the minimization is taken over all conditional distributions, and $I(X,\hat{X})$ is the mutual information. 
\end{definition}
Note that the information RDP function \eqref{eq0} degenerates to RD functions when the constraint \eqref{eq0_c} is removed.

Since the alphabets $\mathcal{X}$ and $\hat{\mathcal{X}}$ are finite, we denote 
\[p_i = p_X(x_i),\ r_j = p_{\hat{X}}(\hat{x}_j),\  d_{ij} = \Delta(x_i,\hat{x}_j)\]
and $w_{ij}=W(\hat{x}_j \mid x_i)$ for all $1\le i\le M, 1\le j\le N$. Here, $W: \mathcal{X} \rightarrow \hat{\mathcal{X}}$ is the channel transition mapping.
Thus the discrete form of problem \eqref{eq0} can be written as
\begin{subequations} \label{eq1}
\begin{align}
    \min _{\bm{w},\bm{r}} \quad  &\sum_{i=1}^M \sum_{j=1}^N\left(w_{i j} p_i\right)\left[\log w_{i j}-\log r_j\right] \vspace{1ex} \label{eq1_a}\\
    \text { s.t. }  &\sum_{j=1}^N w_{i j}=1,\  \sum_{j=1}^N r_j=1 , \  \forall i\vspace{1ex} \label{eq1_b}\\
    &\sum_{i=1}^M \sum_{j=1}^N w_{i j} p_i d_{i j} \leq D,\   \vspace{1ex} \label{eq1_c}\\
    &d(\bm{p},\bm{r}) \leq P. \label{eq1_d}
\end{align}
\end{subequations}
Unlike WBM-RDP in \cite{chen2023computation}, here we follow the idea of CBA model in \cite{chen2023constrained} to eliminate the constraint of the reconstruction distribution by adopting an alternative optimization strategy along the $r_j$ direction, as elaborated in the following subsections.

One commonly used measure of perceptual quality $d(\bm{p},\bm{r})$ is the Wasserstein metric,
\begin{subequations} \label{metric}
\begin{align}
\mathcal{W} (\bm{p},\bm{r}) =  &\min_{\bm{\Pi}}\quad\sum_{i=1}^M \sum_{j=1}^N \Pi_{i j} c_{i j} \vspace{1ex}\\ 
      &\text { s.t. } \sum_{i=1}^M \Pi_{i j}= r_j ,\  \sum_{j=1}^N \Pi_{i j}= p_i,\  \forall i, j,
\end{align}
\end{subequations}%
where $c_{ij}$ denotes the cost matrix between $x_i$ and $\hat{x}_j$.  In some cases, the TV distance $\delta(\bm{p}, \bm{r}) = \frac{1}{2}\|\bm{p}-\bm{r} \Vert_1$ and the Kulback-Leibler (KL) divergence $\text{KL}(\bm{p}\| \bm{r}) = \sum_{i=1}^M p_i \left[\log p_{i}-\log r_i\right]$ are also considered as a measure of perceptual quality.

\begin{remark}
    If the perception measure is substituted by the TV distance, we only need to set the cost matrix as $c_{ij} = \mathbf{1}_{i \neq j}$ (see Eq. (6.11) of \cite{villani2009optimal}). Thus, TV distance can be seen as a special case of the Wasserstein metric.  
\end{remark}

\section{The \alg algorithm} \label{sec:algorithm}

In this section, we present the newly proposed primal–dual algorithms for the aforementioned model \eqref{eq1}, which are developed as an extension of the original IAS algorithm and therefore are called the \emph{IAS-II algorithm}.
Since the KL divergence admits a closed-form expression, whereas the Wasserstein metric is inherently defined as the solution to the problem \eqref{metric}, we design separate algorithms tailored to the specific characteristics of each measure in the following subsections.

\subsection{KL divergence case}
In this case, the problem \eqref{eq1} can be written as 
\begin{subequations} \label{eq2}
\begin{align}
    \min _{\bm{w},\bm{r}} \quad  &\sum_{i=1}^M \sum_{j=1}^N\left(w_{i j} p_i\right)\left[\log w_{i j}-\log r_j\right] \vspace{1ex} \label{eq2_a}\\
    \text { s.t. }  &\sum_{j=1}^N w_{i j}=1,\  \sum_{j=1}^N r_j=1 , \  \forall i\vspace{1ex} \label{eq2_b}\\
    &\sum_{i=1}^M \sum_{j=1}^N w_{i j} p_i d_{i j} \leq D,\   \vspace{1ex} \label{eq2_c}\\
    &\sum_{j=1}^M p_j \log\left(\sum_{i=1}^Mp_iw_{i j}\right) \geq T, \label{eq2_d}
\end{align}
\end{subequations}
where $T = \sum_{i=1}^M p_i \log p_{i} - P$. Thus, the above optimization problem can be seen as a double-minimization problem.

First, fix the variable $\bm{w}$, and minimize \eqref{eq2} as an optimization problem with respect to the variable $\bdr$ only.
By introducing the multiplier $\eta$, the Lagrangian function can be written as: 
\begin{equation*}
    \mathcal{L}(\bm{r},\eta)=\sum_{i=1}^M \sum_{j=1}^N\left(w_{i j} p_i\right)\left[\log w_{i j}-\log r_j\right]+\eta(\sum_{j=1}^N r_j-1).
\end{equation*}
The first-order condition gives
\begin{equation*}
\frac{\partial \mathcal{L}}{\partial r_{j}}=-\sum_{i=1}^{M} w_{i j}p_{i} \frac{1}{r_{j}}+\eta=0,
\end{equation*}
and it further yields the representation of $\bdr$ by the multiplier $\eta$:
\begin{equation}\label{from_of_r}
r_{j} = \left(\sum_{i=1}^{M} w_{i j}p_{i} \right)\Big/\eta.
\end{equation}
Substituting \eqref{from_of_r} into the equality constraint $\sum_j r_j=1$, we have that the multiplier $\eta$ should be updated to satisfy the following equation:
\begin{equation*}
F(\eta) \triangleq \sum_{j=1}^{N}\left[\left(\sum_{i=1}^{M} w_{i j}p_{i} \right)\Big/\eta\right]-1=0. \label{F_def}
\end{equation*}
Here, $F(\eta)$ is a monotonic function with a unique real root $\eta=1$, due to the fact that 
\begin{equation*}
\sum_{i=1}^{M}\sum_{j=1}^{N} w_{ij}p_{i}=1.
\end{equation*}
Hence, we obtain the optimal $\bdr$ under fixed $\bdw$, as
\begin{equation*}
{r}_j=\sum_{i=1}^{M} w_{i j}p_{i}, \quad j = 1, \cdots, N.
\end{equation*}
Second, fix the variable $\bdr$, and minimize \eqref{eq2} as an optimization problem with respect to the variable $\bdw$ only. We have to solve the following optimization sub-problem:
\begin{subequations} \label{eq3}
\begin{align}
    \min _{\bm{w}} \quad  &\sum_{i=1}^M \sum_{j=1}^N\left(w_{i j} p_i\right)\left[\log w_{i j}-\log r_j\right] \vspace{1ex} \label{eq3_a}\\
    \text { s.t. }  &\sum_{j=1}^N w_{i j}=1,\  \  \forall i\vspace{1ex} \label{eq3_b}\\
    &\sum_{i=1}^M \sum_{j=1}^N w_{i j} p_i d_{i j} \leq D,\   \vspace{1ex} \label{eq3_c}\\
    &\sum_{j=1}^M p_j \log\left(\sum_{i=1}^Mp_iw_{i j}\right) \geq T, \label{eq3_d}
\end{align}
\end{subequations}
By introducing the multipliers $\bm{b}, \lambda\geq 0,\gamma\geq 0$, the Lagrange function of \eqref{eq3} can be written as:
\begin{multline}\label{Lagrangian}
    \mathcal{L} \left(\bm{w};\bm{b}, \lambda,\gamma \right) =
    \sum_{i=1}^M \sum_{j=1}^N w_{i j} p_i \log \frac{w_{i j}}{r_j} 
    + \sum_{i=1}^M b_i \left(\sum_{j=1}^N w_{ij} - 1\right) \\
    + \lambda \left(\sum_{i=1}^M \sum_{j=1}^N w_{i j} p_i d_{i j} - D\right)
    - \gamma \left( p_j \log\left(\sum_{i=1}^M p_i w_{i j}\right) - T \right)
\end{multline}
Taking the derivative of $\mathcal{L} \left(\bm{w};\bm{b}, \lambda,\gamma \right)$ with respect to the primal variable $\bm{w}$, one obtains:
\begin{equation*} 
\begin{aligned}
    \frac{\partial \mathcal{L}}{\partial w_{i j}}&=p_i(1+\log w_{i j}-\log r_j)+\lambda p_i d_{ij}-\gamma p_j p_i/\left(\sum_{i=1}^M p_iw_{i j}\right)+b_i,
\end{aligned}
\end{equation*}
and further yields the representation of $\bdw$ by dual variables
\begin{equation}\label{w}
w_{ij} = r_je^{a_j-\lambda d_{ij}-b_i/p_i-1},
\end{equation}
where
\begin{equation}\label{a}
a_{j} = \frac{\gamma p_j}{\sum_{i=1}^Mp_iw_{ij}},
\end{equation}
combining \eqref{w} and \eqref{a}, we can obtain
\begin{equation}\label{ac}
a_{j} = \frac{\gamma p_j}{\sum_{i=1}^Mp_ir_je^{a_j-\lambda d_{ij}-b_i/p_i-1}},
\end{equation}
Then we can obtain $a_j$ by finding the root of 
\begin{equation*}
    f_j(a_j) =  p_j e^{-a_j} - 1/\gamma\sum_{i=1}^M p_ir_je^{-\lambda d_{ij}-b_i/p_i-1} a_{j}
\end{equation*}
by Newton's method in $\mathbb{R}^{+}$.
Substituting \eqref{w} into \eqref{eq3_b}, we can update $b_i$ as follows:
\begin{equation}\label{b}
    b_i=p_i\log(\sum_j r_j e^{a_j-\lambda d_{ij}-1}).
\end{equation}

To update the dual variables $\lambda$ and $\gamma$, we can use a similar technique in \cite{chen2023constrained}, substituting \eqref{a} into \eqref{eq3_d} and \eqref{w} into \eqref{eq3_c} respectively, and we obtain the inner iterations:
\begin{itemize}
    \item Find the root of  $$f_j(a_j)  =  p_j e^{-a_j} -  1/\gamma \sum_{i=1}^M p_ir_j \allowbreak e^{-\lambda d_{ij}-b_i/p_i-1} a_{j}$$ by Newton's method in $\mathbb{R}^{+}$,
    \item Update $\bm{b}$ by $$b_i=p_i\log(\sum_j r_j e^{a_j-\lambda d_{ij}-1}).$$
    \item Update $\gamma$ by $$\gamma = \exp({\sum_{i=1}^M p_i \log a_{i}-P}).$$
    \item Find the root of $$g(\lambda) = \sum_{i=1}^M p_i\sum_{j=1}^N  r_jd_{ij} \allowbreak e^{a_j-\lambda d_{ij}-b_i/p_i-1}- D$$ by Newton's method in $\mathbb{R}^{+}$.
\end{itemize}

After the alternating iterations, the optimization problem \eqref{eq3} will be solved, and we will get the optimal solution $\bm{w}$. 

The \alg algorithm for the KL case can be summarized in Alg \ref{alg:KL}.
\begin{algorithm}[ht]
	\caption{IAS-II Algorithm for KL case}
	\label{alg:KL}
	\begin{algorithmic}[1]
		\REQUIRE Distribution $p_{i}$, distortion measure $d_{ij}$,
        maximum iteration number $max\_iter$.
		\ENSURE Minimal value $\sum_{i=1}^{M} \sum_{j=1}^{N} w_{i j}p_i(\log w_{i j}-\log r_j)$.
        \STATE \textbf{Initialization:} $\bm{r}=\frac{1}{N}\mathbf{1}_{N},\bm{a}=\mathbf{1}_{N},\bm{b}=\mathbf{1}_{M},\gamma,\lambda=1$
		\FOR{$\ell = 1 : max\_iter$}
		\STATE $r_{j} \gets \sum_i p_i w_{ij}$
		\WHILE{the inner iteration is sufficient}
            \STATE Find root of $f_j(a_j)$ by Newton's method in $\mathbb{R}^{+}$
		\STATE $b_i\gets p_i\log(\sum_j r_j e^{a_j-\lambda d_{ij}-1})$
		\STATE $\gamma \gets \exp({\sum_{i=1}^M p_i \log a_{i}-P})$
            \STATE Find root of $g(\lambda)$ by Newton's method in $\mathbb{R}^{+}$
            \ENDWHILE
            \STATE $w_{ij} \gets r_je^{a_j-\lambda d_{ij}-b_i/p_i-1}$
		\ENDFOR
		\STATE \textbf{end}
		\RETURN $\sum_{i=1}^{M} \sum_{j=1}^{N} w_{i j}p_i(\log w_{i j}-\log r_j)$
	\end{algorithmic}
\end{algorithm}\\

\subsection{Wasserstein distance case}
Unlike the KL divergence, the Wasserstein metric does not have a closed-form expression; instead, it is itself the solution to ~\eqref{metric}. To address this, we introduce auxiliary optimization variables to enable an equivalent transformation, thereby reformulating the nested optimization problem~\eqref{eq1} into a unified simplex structure. The detailed formulation is provided in the following theorem.

\begin{theorem}
\label{thm-rdp}
The optimal value of \eqref{eq1} is equal to the optimal value of the following optimization problem \eqref{wasserstein}. Meanwhile, the optimal solution $(\bm{w},\bm{r})$ of the following optimization problem \eqref{wasserstein_0} is the optimal solution of \eqref{eq1}:
\begin{subequations}\label{wasserstein_0}
\begin{align}
    \min _{\bm{w},\bm{r},\bm{\Pi}} \quad &\sum_{i=1}^M \sum_{j=1}^N\left(w_{i j} p_i\right)\left[\log w_{i j}-\log r_j\right]     \label{wasserstein_0_a}\vspace{1ex}\\
    \text { s.t. }  &\sum_{j=1}^N w_{i j}=1,\   \sum_{i=1}^M p_iw_{i j} =\sum_{i=1}^M \Pi_{i j} , \label{wasserstein_0_b}\vspace{1ex}\\
    & \sum_{j=1}^N \Pi_{i j}= p_i, \ \sum_{j=1}^N r_j=1. \label{wasserstein_0_c}\vspace{1ex}\\
    &\sum_{i=1}^M \sum_{j=1}^N w_{i j} p_i d_{i j} \leq D,\label{wasserstein_0_d}\vspace{1ex}\\
    &\sum_{i=1}^M \sum_{j=1}^N \Pi_{i j} c_{i j} \label{wasserstein_0_e}\leq P,
\end{align}
\end{subequations}
\end{theorem}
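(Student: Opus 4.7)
The plan is to establish the equivalence by proving two inequalities between the optimal values and then transferring optimizers across. The starting observation is that in both \eqref{eq1} (with the Wasserstein perception \eqref{metric}) and \eqref{wasserstein_0}, the variable $\bm{r}$ enters the objective only through $-\sum_j\bigl(\sum_i p_i w_{ij}\bigr)\log r_j$ and is otherwise constrained merely by $\sum_j r_j = 1$. Since $-\log$ is strictly convex and $\sum_{i,j} p_i w_{ij} = 1$, the minimum over $\bm{r}$ with $\bm{w}$ fixed is attained uniquely at $r_j^\star(\bm{w}) = \sum_i p_i w_{ij}$ and reduces the objective to the mutual information $I(X,\hat{X})$. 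This pins down the role of $\bm{r}$ at every optimum and reduces the whole comparison to a correspondence in the remaining variables $(\bm{w},\bm{\Pi})$.

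For the direction $\mathrm{val}\eqref{wasserstein_0}\geq \mathrm{val}\eqref{eq1}$, I would take any feasible triple $(\bm{w},\bm{r},\bm{\Pi})$ of \eqref{wasserstein_0}. The constraints $\sum_j \Pi_{ij}=p_i$ and $\sum_i \Pi_{ij}=\sum_i p_i w_{ij}$ exhibit $\bm{\Pi}$ as a coupling between $\bm{p}$ and the induced marginal $\bm{r}^\star(\bm{w})$ whose total cost is at most $P$, so $\mathcal{W}(\bm{p},\bm{r}^\star(\bm{w})) \leq P$ by definition \eqref{metric}. Together with the unchanged distortion constraint this makes $\bm{w}$ feasible for \eqref{eq1}, and replacing $\bm{r}$ by $\bm{r}^\star(\bm{w})$ in the objective can only decrease it, so the optimal value of \eqref{eq1} is no larger. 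For the converse, take any feasible $\bm{w}$ of \eqref{eq1} and let $\bm{\Pi}^\star$ be an optimal transport plan attaining $\mathcal{W}(\bm{p},\bm{r}^\star(\bm{w}))$; then $(\bm{w},\bm{r}^\star(\bm{w}),\bm{\Pi}^\star)$ satisfies every constraint of \eqref{wasserstein_0} with the same mutual-information objective, giving the reverse inequality.

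To transfer the optimizer, let $(\bm{w}^\star,\bm{r}^\star,\bm{\Pi}^\star)$ solve \eqref{wasserstein_0}. The strict-convexity observation above forces $r_j^\star=\sum_i p_i w_{ij}^\star$; then the first direction shows that $(\bm{w}^\star,\bm{r}^\star)$ is feasible for \eqref{eq1} and attains the common optimal value, hence is optimal there. The converse direction works analogously by appending an optimal coupling obtained from \eqref{metric}.

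The main obstacle I foresee is purely bookkeeping: the Wasserstein perception in \eqref{eq1} is itself a nested minimization over couplings, and one must check that the marginal conditions in \eqref{wasserstein_0_b}--\eqref{wasserstein_0_c} — which never write $r_j$ explicitly as a marginal of $\bm{\Pi}$ — nonetheless encode precisely the coupling relation between $\bm{p}$ and $\bm{r}^\star(\bm{w})$ needed to invoke \eqref{metric}. Once this encoding is unpacked and the implicit optimality of $\bm{r}^\star(\bm{w})$ in the $\log$-term of the objective is exploited, the two-inequality argument becomes routine.
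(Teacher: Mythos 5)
Your argument takes the same two-inequality route as the paper, and your forward direction (that the optimal value of \eqref{wasserstein_0} dominates that of \eqref{eq1}) is clean and complete. However, the premise of your ``starting observation'' is false for \eqref{eq1}: there, $\bm{r}$ is \emph{not} ``otherwise constrained merely by $\sum_j r_j = 1$'' --- it is also directly subject to the perception constraint \eqref{eq1_d}, $d(\bm{p},\bm{r})\leq P$. Consequently the inner minimizer over $\bm{r}$ in \eqref{eq1} need not be the induced marginal $r_j^\star(\bm{w})=\sum_i p_i w_{ij}$: whenever $\mathcal{W}\bigl(\bm{p}, \bm{r}^\star(\bm{w})\bigr) > P$, the constrained minimizer is some other distribution inside the Wasserstein ball. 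This distinguishes \eqref{eq1} from \eqref{wasserstein_0}, where $\bm{r}$ really is free up to normalization and the perception budget is spent on the marginal $\sum_i p_i w_{ij}$ via $\bm{\Pi}$.

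This creates a genuine hole in your converse direction. You take a feasible $(\bm{w},\bm{r})$ of \eqref{eq1} --- so all you know is $\mathcal{W}(\bm{p},\bm{r})\leq P$ --- and then form the triple $(\bm{w}, \bm{r}^\star(\bm{w}), \bm{\Pi}^\star)$ with $\bm{\Pi}^\star$ an optimal coupling of $\bm{p}$ and $\bm{r}^\star(\bm{w})$. Its cost is $\mathcal{W}\bigl(\bm{p},\bm{r}^\star(\bm{w})\bigr)$, and nothing in the feasibility of $(\bm{w},\bm{r})$ guarantees this is at most $P$, so the constraint \eqref{wasserstein_0_e} may fail and the triple need not be feasible for \eqref{wasserstein_0}. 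To close the argument you must first establish that the optimum of \eqref{eq1} can always be taken with $\hat{\bm{r}} = \bm{r}^\star(\hat{\bm{w}})$, which is precisely the nontrivial step your observation asserts as immediate. For what it is worth, the paper's own proof makes the same silent jump: it appends the optimal transport plan $\hat{\bm{\Pi}}$ between $\bm{p}$ and $\hat{\bm{r}}$ and declares the triple feasible for \eqref{wasserstein_0}, which again requires $\hat{r}_j = \sum_i p_i \hat{w}_{ij}$ without saying so. A complete proof of either direction would need an explicit lemma pinning down the marginal at optimality of \eqref{eq1} before invoking the coupling.
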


\begin{proof}
    Suppose that the optimal solution to (\ref{wasserstein_0}) is $\{ \bm{w}^*,\bm{r}^*,\bm{\Pi}^* \}$. 
    First, according to the property of the Wasserstein metric and the constraint, we can get 
    \begin{equation*}
        \mathcal{W}(\bm{p},\bm{r^*}) \leq \sum_{i=1}^M \sum_{j=1}^N \Pi^*_{i j} c_{i j} \leq D,
    \end{equation*}    
    thus $\{ \bm{w}^*,\bm{r}^* \}$ are the feasible solution to (\ref{eq1}). 
    
    Let $\{ \hat{\bm{w}},\hat{\bm{r}} \}$ be the the optimal solution to (\ref{eq1}), and $\hat{\bm{\Pi}}$ be the optimal transport plan of the Wasserstein metric between $\{ \hat{\bm{w}},\hat{\bm{r}} \}$, we denote 
     \begin{equation*}
    L(\bm{w},\bm{r})= \sum_{i=1}^M \sum_{j=1}^N\left(w_{i j} p_i\right)\left[\log w_{i j}-\log r_j\right] .
    \end{equation*}
    If $L(\hat{\bm{w}},\hat{\bm{r}}) < L(\bm{w}^*,\bm{r}^*)$, then $\{ \hat{\bm{w}},\hat{\bm{r}}, \hat{\bm{\Pi}} \}$ is the feasible solution to (\ref{wasserstein_0}) whose target value is lower than that of $\{ \bm{w}^*,\bm{r}^*,\bm{\Pi}^*\}$, which leads
    to contradiction. 
    
    Therefore $\{ \bm{w}^*,\bm{r}^* \}$ is the optimal solution to (\ref{eq1}).
\end{proof}

We notice that the above problem \eqref{wasserstein_0} is not strictly convex on $\bm{\Pi}$. Although the optimal value of \eqref{wasserstein_0} is unique, the corresponding optimal solutions can vary in the dimensions of $\bm{\Pi}$. This variability can affect both the convergence behavior and the computational speed of numerical solvers. Therefore, we incorporate the entropy regularization technique from optimal transport by introducing an entropy regularization coefficient $\varepsilon$ together with an entropy term:
$$H(\bm{\Pi}) = \sum_{i=1}^M \sum_{j=1}^N \Pi_{i j} \log(\Pi_{i j}).$$

Following the convergence proof for entropic optimal transport in~\cite{nutz2022entropic}, it can be shown that as $\varepsilon \to 0$, the solution to the entropy-regularized problem converges to that of the original problem. Moreover, entropy regularization renders the problem strictly convex, which in turn facilitates the design of efficient algorithms. The final entropy-regularized problem can be written as:
\begin{subequations}\label{wasserstein}
\begin{align}
    \min _{\bm{w},\bm{r},\bm{\Pi}} \quad &\sum_{i=1}^M \sum_{j=1}^N\left(w_{i j} p_i\right)\left[\log w_{i j}-\log r_j\right] + \varepsilon H(\bm{\Pi})    \label{wasserstein_a}\vspace{1ex}\\
    \text { s.t. }  &\sum_{j=1}^N w_{i j}=1,\   \sum_{i=1}^M p_iw_{i j} =\sum_{i=1}^M \Pi_{i j} , \label{wasserstein_b}\vspace{1ex}\\
    & \sum_{j=1}^N \Pi_{i j}= p_i, \ \sum_{j=1}^N r_j=1. \label{wasserstein_c}\vspace{1ex}\\
    &\sum_{i=1}^M \sum_{j=1}^N w_{i j} p_i d_{i j} \leq D,\label{wasserstein_d}\vspace{1ex}\\
    &\sum_{i=1}^M \sum_{j=1}^N \Pi_{i j} c_{i j} \label{wasserstein_e}\leq P,
\end{align}
\end{subequations}

The above optimization problem can also be seen as a double-minimization problem with respect to $(\bm{w},\bm{\Pi})$ and $\bm{r}$. Unlike the WBM-RDP in \cite{chen2023computation}, we only have strong constraints on one side for both couplings $\text{diag}(\bm{p})\cdot\bm{w}, \bm{\Pi}$ and do not directly require both to be equal to $\bm{r}$ for the other side, so the problem can also be seen as a relaxation of the Wasserstein-Barycenter problem.

First, fix the variable $\bm{w}$, and minimize \eqref{wasserstein} as an optimization problem with respect to the variable $\bdr$ only.
By introducing the multiplier $\eta$, the Lagrangian function can be written as:
\begin{equation*}
    \mathcal{L}(\bm{r},\eta)=\sum_{i=1}^M \sum_{j=1}^N\left(w_{i j} p_i\right)\left[\log w_{i j}-\log r_j\right]+\eta(\sum_{j=1}^N r_j-1).
\end{equation*}
The first-order condition gives
\begin{equation*}
\frac{\partial \mathcal{L}}{\partial r_{j}}=-\sum_{i=1}^{M} w_{i j}p_{i} \frac{1}{r_{j}}+\eta=0,
\end{equation*}
and it further yields the representation of $\bdr$ by the multiplier $\eta$:
\begin{equation}\label{from_of_r1}
r_{j} = \left(\sum_{i=1}^{M} w_{i j}p_{i} \right)\Big/\eta.
\end{equation}
Substituting \eqref{from_of_r1} into the equality constraint $\sum_j r_j=1$, we have that the multiplier $\eta$ should be updated to satisfy the following equation:
\begin{equation*}
F(\eta) \triangleq \sum_{j=1}^{N}\left[\left(\sum_{i=1}^{M} w_{i j}p_{i} \right)\Big/\eta\right]-1=0. \label{F_def}
\end{equation*}
Here, $F(\eta)$ is a monotonic function with a unique real root $\eta=1$, due to the fact that 
\begin{equation*}
\sum_{i=1}^{M}\sum_{j=1}^{N} w_{ij}p_{i}=1.
\end{equation*}
Hence, we obtain the optimal $\bdr$ under fixed $\bdw$, as
\begin{equation*}
{r}_j=\sum_{i=1}^{M} w_{i j}p_{i}, \quad j = 1, \cdots, N.
\end{equation*}

Second, fix the variable $\bdr$, and minimize \eqref{wasserstein} as an optimization problem with respect to the variable $\bdw,\bm{\Pi}$ only. We have to solve the following optimization sub-problem:
\begin{subequations}\label{wasserstein1}
\begin{align}
    \min _{\bm{w},\bm{\Pi}} \quad &\sum_{i=1}^M \sum_{j=1}^N\left(w_{i j} p_i\right)\left[\log w_{i j}-\log r_j\right]  
    + \varepsilon H(\bm{\Pi})   \label{wasserstein1_a}\vspace{1ex}\\
    \text { s.t. }  &\sum_{j=1}^N w_{i j}=1,\   \sum_{i=1}^M p_iw_{i j} =\sum_{i=1}^M \Pi_{i j} , \label{wasserstein1_b}\vspace{1ex}\\
    & \sum_{j=1}^N \Pi_{i j}= p_i,\quad \sum_{i=1}^M \sum_{j=1}^N \Pi_{i j} c_{i j} \leq P \label{wasserstein1_c}\vspace{1ex}\\
    &\sum_{i=1}^M \sum_{j=1}^N w_{i j} p_i d_{i j} \leq D,\label{wasserstein1_d}\vspace{1ex}
\end{align}
\end{subequations}
by introducing the multipliers $\bm{\alpha},\bm{\hat{\alpha}},\bm{\beta},\lambda\geq 0,\gamma\geq 0$, the Lagrangian function can be written as:
\begin{multline}\label{Lagrangian_w}
    \mathcal{L} \left(\bm{w}, \bm{\Pi}; \bm{\alpha},\bm{\hat{\alpha}},\bm{\beta},\lambda,\gamma \right) =
    \sum_{i=1}^M \sum_{j=1}^N w_{i j} p_i \log \frac{w_{i j}}{r_j}
    + \varepsilon \sum_{i=1}^M \sum_{j=1}^N \Pi_{i j} \log(\Pi_{i j}) \\
    + \sum_{i=1}^M  \alpha_i \left(\sum_{j=1}^N w_{i j} - 1\right)
    + \sum_{i=1}^M \hat{\alpha}_i \left(\sum_{j=1}^N \Pi_{i j} - p_i\right) + \lambda \left(\sum_{i=1}^M \sum_{j=1}^N c_{i j} \Pi_{i j} - P\right) \\
    + \gamma \left(\sum_{i=1}^M \sum_{j=1}^N p_i w_{i j} d_{i j} - D\right)
    + \sum_{j=1}^N \beta_j \left(\sum_{i=1}^M p_i w_{i j} - \sum_{i=1}^M \Pi_{i j}\right)
\end{multline}
Taking the derivative of $\mathcal{L} \left(\bm{w}, \bm{\Pi}; \bm{\alpha},\bm{\hat{\alpha}},\bm{\beta},\lambda,\gamma \right)$ with respect to the primal variable $\bdw$, one obtains
\begin{equation*} 
\begin{aligned}
    \frac{\partial \mathcal{L}}{\partial w_{i j}}=p_i(1+\log w_{i j}-\log r_j)+\alpha_{i}+\beta_{j}p_i
    +\gamma p_i d_{ij},
\end{aligned}
\end{equation*}
and it further yields the representation of $\bdw$ by dual variables
\begin{align}
        w_{ij} = r_je^{\beta_j-\frac{{\alpha}_i}{p_i}-\gamma d_{ij}-1} \label{w_w}
    \end{align}

Taking the derivative of $\mathcal{L} \left(\bm{w}, \bm{\Pi}; \bm{\alpha},\bm{\hat{\alpha}},\bm{\beta},\lambda,\gamma \right)$ with respect to the primal variable $\bm{\Pi}$, one obtains
\begin{equation*} 
\begin{aligned}
    \frac{\partial \mathcal{L}}{\partial \Pi_{i j}}=\varepsilon(1+\log \Pi_{i j})+\hat{\alpha}_{i}-\beta_{j}
    +\lambda c_{ij},
\end{aligned}
\end{equation*}
and it further yields the representation of $\bdw$ by dual variables
\begin{align}
        \Pi_{ij} = e^{-\frac{\beta_j}{\epsilon}-\frac{\hat{\alpha}_i}{\varepsilon }-\frac{\lambda c_{ij}}{\epsilon} -1} \label{Pi_w}
    \end{align}

By substituting \eqref{w_w} into the constraint $\sum_j w_{ij}=1$, one obtains the following way to update $\bm{\alpha}$
$$\alpha_i = p_i\log\left(\sum_{j=1}^Nr_je^{\beta_j-\gamma d_{ij}-1}\right)$$

By substituting \eqref{Pi_w} into the constraint $\sum_j \Pi_{ij}=p_i$, one obtains the following way to update $\bm{\hat{\alpha}}$
$$\hat{\alpha}_i = \varepsilon \log\left(\sum_{j=1}^N e^{-\frac{\beta_j}{\epsilon}-\frac{\lambda c_{ij}}{\epsilon} -1}\right)$$

By substituting \eqref{w_w} and \eqref{Pi_w} into the constraint $\sum_i p_i w_{ij}=\sum_i \Pi_{ij}$, $\bm{\beta}$ can be updated by the following way
$$\beta_j = \frac{\varepsilon}{\varepsilon+1}\log\left(\frac{\sum_{i=1}^M e^{-\frac{\lambda c_{ij}}{\epsilon}-\frac{\hat{\alpha}_i}{\varepsilon } -1}}{\sum_{i=1}^M p_ir_je^{-\gamma d_{ij}-\frac{{\alpha_i}}{p_i}-1}}\right)$$

By substituting \eqref{w_w} into the inequality constraint \eqref{wasserstein1_d}, one obtains the following way to update ${\gamma}$:\\
 find the root of the monotonic function $G(\gamma)$ using Newton's method:
    $$G(\gamma) = \sum_{i=1}^M{\sum_{j=1}^N} p_i r_{j}e^{\beta_j-\frac{{\alpha}_i}{p_i}-\gamma d_{ij}-1} d_{i j}-D$$
the monotonic property is due to 
$$G^{\prime}(\gamma) = -\sum_{i=1}^M{\sum_{j=1}^N} p_i r_{j}e^{\beta_j-\frac{{\alpha}_i}{p_i}-\gamma d_{ij}-1} d_{i j}^2\leq 0$$
    
By substituting \eqref{Pi_w} into the inequality constraint \eqref{wasserstein1_c}, one obtains the following way to update ${\lambda}$:\\
find the root of the monotonic function $F(\lambda)$ using Newton's method:
    $$F(\lambda) = \sum_{i=1}^M{\sum_{j=1}^N}  e^{-\frac{\beta_j}{\epsilon}-\frac{\hat{\alpha}_i}{\varepsilon }-\frac{\lambda c_{ij}}{\epsilon} -1} c_{i j}-P$$
the monotonic property is due to 
$$F^{\prime}(\lambda) = -\sum_{i=1}^M{\sum_{j=1}^N}  e^{-\frac{\beta_j}{\epsilon}-\frac{\hat{\alpha}_i}{\varepsilon }-\frac{\lambda c_{ij}}{\epsilon} -1} c_{i j}^2/\varepsilon\leq 0$$
The \alg algorithm for Wasserstein case can be summarized as follow:
\begin{algorithm}[ht]
	\caption{\alg algorithm for Wasserstein case}
	\label{alg:OT_ibp}
	\begin{algorithmic}[1]
		\REQUIRE Distribution $p_{i}$, distortion measure $d_{ij}$, cost $c_{ij}$, the entropy regularization constant $\varepsilon$,
        maximum iteration number $max\_iter$.
		\ENSURE Minimal value $\sum_{i=1}^{M} \sum_{j=1}^{N} w_{i j}p_i(\log w_{i j}-\log r_j)$.
        \STATE \textbf{Initialization:} $\bm{r}=\frac{1}{N}\mathbf{1}_{N},\bm{\phi}=\mathbf{1}_{M},\bm{\hat{\phi}}=\mathbf{1}_{M},\bm{\beta}=\mathbf{0}_{N},\gamma=0,\lambda=0$
		\FOR{$\ell = 1 : max\_iter$}
		\STATE $r_{j} \gets \sum_i p_i w_{ij}$
		\WHILE{the inner iteration is sufficient}
            \STATE $\phi_i\gets 1/\sum_j r_j e^{\beta_j-\gamma d_{ij}}$
		\STATE $\hat{\phi}_i\gets 1/\sum_j e^{-\beta_j/\varepsilon-\lambda c_{ij}/\varepsilon}$
		\STATE $\beta_j\gets \frac{\varepsilon}{\varepsilon+1}\log\left(\frac{\sum_{i=1}^M \exp({-\frac{\lambda c_{ij}}{\epsilon}})\hat{\phi}_i}{\sum_{i=1}^M p_ir_je^{-\gamma d_{ij}}\phi_i}\right)$
            \STATE Find the root of $G(\gamma)$ by Newton's method in $\mathbb{R}^{+}$
            \STATE Find the root of $F(\lambda)$ by Newton's method in $\mathbb{R}^{+}$
            \ENDWHILE
            \STATE $w_{ij} \gets r_je^{\beta_j-\gamma d_{ij}}\phi_i$
		\ENDFOR
		\STATE \textbf{end}
		\RETURN $\sum_{i=1}^{M} \sum_{j=1}^{N} w_{i j}p_i(\log w_{i j}-\log r_j)$
	\end{algorithmic}
\end{algorithm}\\
Here, we substitute $e^{-\alpha_i/p_i-1}$ by $\phi_i$ and substitute $e^{-\hat{\alpha}_i/\varepsilon-1}$ by $\hat{\phi}_i$ for convenience.

\section{Convergence Analysis} \label{sec:convergence}

In the following subsections, we rigorously give the convergence analysis of the \alg algorithm under both KL divergence and Wasserstein metric cases.

\subsection{KL divergence case}
Since the algorithm proceeds with both outer and inner iterations, we establish the convergence of each loop separately. In order to prove the convergence of the inner iterations, we begin by establishing the following lemma.

\begin{lemma}\label{lem-0}
\begin{multline*}
    F_1(\bm{a},\bm{b},\lambda,\gamma) =
    \sum_{i=1}^M \sum_{j=1}^N p_i r_j e^{a_j - b_i/p_i - 1 - \lambda d_{ij}}
    - \gamma \sum_{j=1}^N p_j \log a_j \\
    + \sum_{i=1}^M b_i
    + \gamma \log \gamma
    + (P - 1)\gamma
    + D\lambda
\end{multline*}
is convex over $\bm{a},\bm{b},\lambda$ and $\gamma$.
\end{lemma}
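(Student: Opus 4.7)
The plan is to decompose $F_1$ into three groups of terms and check convexity of each piece separately; the sum of convex functions is then convex. Write
\begin{align*}
F_1(\bm{a},\bm{b},\lambda,\gamma) &= E(\bm{a},\bm{b},\lambda) + L(\bm{a},\gamma) + \ell(\bm{b},\lambda,\gamma),\\
E(\bm{a},\bm{b},\lambda) &= \sum_{i,j} p_i r_j \exp\!\bigl(a_j - b_i/p_i - 1 - \lambda d_{ij}\bigr),\\
L(\bm{a},\gamma) &= \gamma \log \gamma - \gamma \sum_{j=1}^N p_j \log a_j,\\
\ell(\bm{b},\lambda,\gamma) &= \sum_{i=1}^M b_i + (P-1)\gamma + D\lambda.
\end{align*}

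For $E$, each summand is a strictly positive coefficient $p_i r_j$ times an exponential whose argument is affine in $(a_j, b_i, \lambda)$. Since $t\mapsto e^t$ is convex and composition of a convex function with an affine map is convex, each summand is convex in $(\bm{a},\bm{b},\lambda)$, and a nonnegative linear combination preserves convexity. The piece $\ell$ is affine and therefore convex. These two observations are routine and will occupy only a few lines.

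The main obstacle, and the one step worth elaborating, is the joint convexity of $L$ in $(\bm{a},\gamma)$ on $(0,\infty)^{N}\times(0,\infty)$: the term $-\gamma\log a_j$ alone is not jointly convex in $(a_j,\gamma)$, so the $\gamma\log\gamma$ contribution must be used to recover convexity. The right tool is the perspective construction: for the convex function $h(x)=-\log x$ on $(0,\infty)$, its perspective
\[
h_p(x,t)\;=\;t\,h(x/t)\;=\;t\log t - t\log x
\]
is jointly convex on $(0,\infty)^2$. Applying this with $x=a_j$ and $t=\gamma$ shows that $\gamma\log\gamma-\gamma\log a_j$ is jointly convex in $(a_j,\gamma)$ for each $j$. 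Multiplying by the nonnegative weights $p_j$ and summing, while using $\sum_{j=1}^N p_j = 1$ so that $\sum_j p_j\,\gamma\log\gamma = \gamma\log\gamma$, yields
\[
\sum_{j=1}^N p_j\bigl(\gamma\log\gamma - \gamma\log a_j\bigr) \;=\; \gamma\log\gamma - \gamma\sum_{j=1}^N p_j\log a_j \;=\; L(\bm{a},\gamma),
\]
so $L$ is jointly convex in $(\bm{a},\gamma)$.

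Combining the three pieces, $F_1$ is a sum of convex functions over $(\bm{a},\bm{b},\lambda,\gamma)$ and hence convex, completing the proof. I expect the only subtlety a reader might question is the identification of the perspective function; everything else is a direct application of standard convexity preserving operations (composition with affine maps, nonnegative linear combinations).
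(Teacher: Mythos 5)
Your proof is correct, and it takes a genuinely different route from the paper's. For the exponential block, the paper writes down the $3\times 3$ Hessian of $e^{a_j - b_i/p_i - 1 - \lambda d_{ij}}$ and factors it as a rank-one outer product $v v^T$ to see positive semidefiniteness; you instead observe that the exponent is affine in $(a_j,b_i,\lambda)$ and invoke convexity of $e^t$ composed with an affine map. For the bilinear block $\gamma\log\gamma - \gamma\log a_j$, the paper again computes the $2\times 2$ Hessian and checks it is positive semidefinite, while you identify this expression as the perspective of $h(x)=-\log x$, whose joint convexity is a textbook fact. Both proofs then use the same algebraic step, rewriting $\gamma\log\gamma - \gamma\sum_j p_j\log a_j = \sum_j p_j(\gamma\log\gamma - \gamma\log a_j)$ via $\sum_j p_j = 1$, so this identity is not novel to your argument. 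What your approach buys is brevity and transparency: the reader recognizes two standard convexity-preserving operations (affine precomposition, perspective) rather than auditing two Hessian computations. What the paper's approach buys is self-containedness: it requires no prior familiarity with the perspective construction and makes the rank deficiency of the Hessians (hence non-strict convexity) visible, which foreshadows the later discussion of non-strict convexity and the role of entropy regularization in the Wasserstein case. Either proof would be acceptable; yours is the more concise of the two.
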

\begin{proof}
    We first show $e^{a_j-b_i/p_i-1-\lambda d_{ij}}$ is convex with respect to $a_j,b_i,\lambda$. By taking derivatives trice, we can obtain the Hessian matrix as follow:
    $$e^{a_j-b_i/p_i-1-\lambda d_{ij}}\left(
    \begin{array}{ccc}
       1  & -1/p_i &-d_{ij} \\
        -1/p_i & 1/p_i^2 & d_{ij}/p_i \\
         -d_{ij} &d_{ij}/p_i& d_{ij}^2
    \end{array}\right)$$ 
    Since
    \begin{equation*}
    \begin{aligned}
    \left(
    \begin{array}{ccc}
       1  & -1/p_i &-d_{ij} \\
        -1/p_i & 1/p_i^2 & d_{ij}/p_i \\
         -d_{ij} &d_{ij}/p_i& d_{ij}^2
    \end{array}\right) =(1,-1/p_i,-d_{ij})^T (1,-1/p_i,-d_{ij}),
    \end{aligned}
    \end{equation*}
    we have the Hessian matrix is positive semidefinite. Then $e^{a_j-b_i/p_i-1-\lambda d_{ij}}$ is convex with respect to $a_j,b_i,\lambda$, and is convex with respect to $\bm{a},\bm{b},\lambda,\gamma$ further.

    Since the sum of convex functions is convex, we have $\sum_{i=1}^M\sum_{j=1}^N p_ir_je^{a_j-b_i/p_i-1-\lambda d_{ij}}$ is convex with respect to $\bm{a},\bm{b},\lambda,\gamma$.

    Then, we only need to prove the convexity of $-\gamma\sum_{j=1}^N p_j\log{a_j}+\gamma\log\gamma$. Note that the function $\gamma\log\gamma-\gamma \log a_j$ is convex with respect to $\gamma, a_j$, since the Hessian matrix $$\left(
    \begin{array}{cc}
       1/\gamma  &  -1/a_j\\
        -1/a_j & \gamma /a_j^2 
    \end{array}\right)$$ is positive semi-definite.\\
    \begin{equation*}
            -\gamma\sum_{j=1}^N p_j\log{a_j}+\gamma\log\gamma =\sum_{j=1}^N p_j(\gamma\log\gamma-\gamma \log a_j)
    \end{equation*}
    is the sum of convex function, so it is convex.
    Finally, we proved the convexity of $F_1(\bm{a},\bm{b},\lambda,\gamma)$
    
\end{proof}

The convergence of the inner iterations is shown in the Theorem \ref{thm-1}.
\begin{theorem} \label{thm-1}
The inner iteration is convergent.
\end{theorem}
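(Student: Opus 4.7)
The plan is to recognize the inner loop as an exact block coordinate descent (BCD) on the convex function $F_1(\bm{a},\bm{b},\lambda,\gamma)$ of Lemma~\ref{lem-0}, and then to apply standard convergence results for BCD on smooth convex functions.

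First I would verify that each update in the inner iteration is exactly the block minimizer of $F_1$ over one coordinate block, with the other blocks held fixed. A direct computation of $\partial F_1/\partial a_j$, $\partial F_1/\partial b_i$, $\partial F_1/\partial \gamma$, and $\partial F_1/\partial \lambda$ shows that the four first-order conditions reproduce, respectively, $f_j(a_j)=0$, the closed-form update $b_i = p_i\log(\sum_j r_j e^{a_j-\lambda d_{ij}-1})$, the update $\gamma = \exp(\sum_j p_j \log a_j - P)$, and $g(\lambda)=0$. Hence the inner loop is cyclic BCD on the four blocks of $F_1$.

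Second, by Lemma~\ref{lem-0} the joint function $F_1$ is convex, and its restriction to each single block is strictly convex on the relevant open domain, so each block minimizer is unique and every sweep strictly decreases $F_1$ unless the iterate is already block-stationary. Consequently the scalar sequence $\{F_1(\bm{a}^{(k)}, \bm{b}^{(k)}, \gamma^{(k)}, \lambda^{(k)})\}$ is monotonically non-increasing. It is also bounded below, since $F_1$ is (up to an additive constant) the Lagrangian dual functional of the convex inner subproblem \eqref{eq3}, whose optimal value is finite; weak duality then supplies a finite lower bound.

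Finally, I would invoke a classical convergence theorem for cyclic BCD on continuously differentiable convex functions with unique block minimizers (e.g.\ Grippo--Sciandrone or Tseng) to conclude that every limit point of the iterates is a global minimizer of $F_1$ and that the objective values converge to $\min F_1$; combined with strong duality for \eqref{eq3} this yields convergence of the inner iteration. The main obstacle I expect is establishing compactness of the iterate trajectory so that limit points exist: because $a_j$ is found by Newton's method in $\mathbb{R}^+$ and $\gamma$ is updated by an exponential formula, one needs uniform bounds, away from $0$ and from $+\infty$, on $\bm{a}$, $\gamma$, and $\lambda$. Such bounds should follow from the monotone descent of $F_1$ together with its coercivity in each block (the exponential term blows up as $b_i\to-\infty$ or $\lambda\to-\infty$, and the $\gamma\log\gamma-\gamma\sum_j p_j\log a_j$ terms control the remaining directions), but this is the delicate part of the argument and must be carried out carefully.
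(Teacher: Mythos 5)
Your proposal takes essentially the same route as the paper's proof: both recognize the inner loop as block coordinate descent on the convex function $F_1$ from Lemma~\ref{lem-0} and verify that each update is the corresponding block first-order condition. In fact your write-up is more careful than the paper's, which simply asserts that alternating minimization of a convex function is convergent without addressing uniqueness of block minimizers, boundedness of iterates, or which BCD convergence theorem is being invoked --- the compactness/coercivity concern you flag at the end is a genuine gap that the paper's proof does not close.
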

\begin{proof}
Since $F_1(\bm{a},\bm{b},\lambda,\gamma)$ is convex in Lemma \ref{lem-0}, we can obtain the following alternating minimization
for $F(\bm{a},\bm{b},\lambda,\gamma)$ is convergent.
\begin{align*}
    \bm{a}^{n+1},\bm{b}^{n+1} &= \arg \min_{\bm{a},\bm{b}} F(\bm{a},\bm{b},\lambda^n,\gamma^n),\\
    \lambda^{n+1},\gamma^{n+1} &= \arg \min_{\lambda,\gamma} F(\bm{a}^{n+1},\bm{b}^{n+1},\lambda,\gamma),
\end{align*}
and in the first block we can have an inner iteration:
\begin{align*}
    \bm{b}^{n+1,m+1} &= \arg \min_{\bm{b}} F(\bm{a}^{n+1,m},\bm{b},\lambda^n,\gamma^n),\\
    \bm{a}^{n+1,m+1} &= \arg \min_{\bm{a}} F(\bm{a},\bm{b}^{n+1,m+1},\lambda^n,\gamma^n), 
\end{align*}
due to the convexity the first minimization can be solved by the first-order condition: 
\begin{align*}
    \sum_{j=1}^N p_ir_je^{a_j^{n+1,m}+b_i-\lambda d_{ij}} - p_i = 0,
\end{align*}
thus $b_i^{n+1,m+1} = \log(-\sum_{j=1}^N r_je^{a_j^{n+1,m}-\lambda d_{ij}})$.

Similarly $a_j^{n+1,m+1}$ can be solved by solving 
\begin{equation*}
    \sum_{i=1}^M p_ir_je^{a^{n+1,m+1}_j+b^{n+1,m+1}_i-\lambda d_{ij}}-\gamma \frac{p_j}{a^{n+1,m+1}_j} = 0,
\end{equation*}
and can be converted to find the root of  
\begin{align*}
    f_j(a_j) = \gamma p_j\frac{e^{-a_j}}{a_{j}} - {\sum_{i=1}^M{p_ir_je^{b^{n+1}_i-\lambda d_{ij}}}}
\end{align*}
by Newton's method due to its monotony on $\mathbb{R}^+$.

For the second block, we can directly find the minimal solutions separately since $F(\bm{a},\bm{b},\lambda,\gamma)= F_1(\bm{a},\bm{b},\lambda)+F_2(\bm{a},\bm{b},\gamma)$. For $\gamma^{n+1}$ we have:
\begin{equation*}
\gamma^{n+1} = e^{\sum_{i=1}^M p_i \log a^{n+1}_{i}-P}.
\end{equation*}
As for $\lambda^{n+1}$, we can use Newton's method to find the root of
\begin{equation*}
g(\lambda) = \sum_{i}^M {p_i}{\sum_{j=1}^N r_jd_{ij}e^{a^{n+1}_j+b^{n+1}_i-\lambda d_{ij}}}-D.
\end{equation*}

By setting $c_j = e^{-b_j}$, we can draw a conclusion the above iteration method is convergent.
\end{proof}

Next, to establish the convergence of the outer iterations, we first prove the following lemmas.
\begin{lemma} \label{lem-1}
For any $\boldsymbol{w}$ satisfying $\sum_{i=1}^M \sum_{j=1}^N w_{i j} p_i d_{i j} = D,\sum_{j=1}^N p_j\log(\sum_{i=1}^M p_i w_{ij})=T$ and setting $f(\boldsymbol{w}, \boldsymbol{r}) = \sum_{i=1}^M \sum_{j=1}^N\left(w_{i j} p_i\right)\left[\log w_{i j}-\log r_j\right]$, we have
$$
f(\boldsymbol{w}, \boldsymbol{r})-f(\tilde{\boldsymbol{w}}(\boldsymbol{r}), \boldsymbol{r}) \geq D\left(\sum_{i=1}^M p_i \bm{w}_i \| \sum_{i=1}^M p_i \tilde{\bm{w}}_i\right),
$$
where $\tilde{\boldsymbol{w}}(\boldsymbol{r})$ is the optimal solution to problem \eqref{eq3}.
\end{lemma}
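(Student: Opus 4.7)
The plan is to rewrite $f(\bm{w},\bm{r}) - f(\tilde{\bm{w}}(\bm{r}),\bm{r})$ as the sum of an $i$-wise KL contribution and a Lagrangian remainder, show the remainder is non-negative via Jensen, and then descend from the $i$-wise KL to the marginal KL by joint convexity. The starting point is first-order optimality of $\tilde{\bm{w}}(\bm{r})$ in the inner subproblem \eqref{eq3}: from the stationarity of the Lagrangian \eqref{Lagrangian} we have
$$\log \tilde w_{ij} - \log r_j \;=\; a_j - \lambda d_{ij} - b_i/p_i - 1, \qquad a_j \;=\; \frac{\gamma\,p_j}{\tilde q_j},$$
with $\lambda,\gamma\ge 0$ and $\tilde q_j := \sum_i p_i \tilde w_{ij}$; the multipliers $\lambda,\gamma$ are paired with the distortion and perception constraints by complementary slackness at $\tilde{\bm{w}}$.

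Next I would use the elementary identity $w\log w - \tilde w\log \tilde w = w\log(w/\tilde w) + (w-\tilde w)\log\tilde w$ to decompose
$$f(\bm{w},\bm{r}) - f(\tilde{\bm{w}},\bm{r}) \;=\; \sum_{i=1}^M p_i\, D(\bm{w}_i\|\tilde{\bm{w}}_i) \;+\; \sum_{i,j} p_i\,(w_{ij}-\tilde w_{ij})\,(\log \tilde w_{ij} - \log r_j).$$
Substituting the KKT identity into the residual sum, the $-1$ and $-b_i/p_i$ pieces vanish because $\sum_j w_{ij} = \sum_j \tilde w_{ij} = 1$ for each $i$; the $-\lambda d_{ij}$ piece vanishes because the hypothesis gives $\sum_{ij}p_i w_{ij} d_{ij}=D$ and complementary slackness gives $\lambda \sum_{ij}p_i\tilde w_{ij}d_{ij}=\lambda D$. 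What survives is the single term $\sum_j a_j (q_j-\tilde q_j) = \gamma \sum_j p_j\bigl(q_j/\tilde q_j - 1\bigr)$, where $q_j := \sum_i p_i w_{ij}$.

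Then I would show this survivor is non-negative. If $\gamma=0$ the remainder is trivially zero. Otherwise complementary slackness on the perception constraint forces $\sum_j p_j \log \tilde q_j = T$; combined with the assumed $\sum_j p_j \log q_j = T$ this yields $\sum_j p_j \log(q_j/\tilde q_j)=0$, and Jensen's inequality applied to the concave $\log$ gives $\log\sum_j p_j(q_j/\tilde q_j)\ge 0$, so $\sum_j p_j(q_j/\tilde q_j)\ge 1$ and the remainder is $\ge 0$. Finally, joint convexity of the KL divergence (equivalently, the log-sum inequality applied column-wise in $j$) yields
$$\sum_{i=1}^M p_i\,D(\bm{w}_i\|\tilde{\bm{w}}_i) \;\ge\; D\!\left(\sum_{i=1}^M p_i\,\bm{w}_i \,\Big\|\, \sum_{i=1}^M p_i\,\tilde{\bm{w}}_i\right),$$
which combined with the non-negativity of the remainder proves the lemma.

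The most delicate step is the clean cancellation of the linear residual: this is precisely where the hypothesis of saturated distortion and perception constraints at $\bm{w}$ is used, in tandem with complementary slackness at $\tilde{\bm{w}}$. It is essential that the hypothesis states \emph{equality} (not merely feasibility) for both constraints, since otherwise the $\lambda$- and $\gamma$-pieces would not cancel exactly and a nontrivial linear error would spoil the KL lower bound; the role of the $\gamma$-equality in particular is to enable the Jensen step by making $\sum_j p_j \log(q_j/\tilde q_j)=0$. Once this cancellation is verified, the joint-convexity and Jensen arguments are entirely routine.
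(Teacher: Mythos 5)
Your proof is correct and follows essentially the same route as the paper's. Both proofs rest on (i) substituting the KKT stationarity form $\log\tilde w_{ij}-\log r_j = a_j-\lambda d_{ij}-b_i/p_i-1$, (ii) using the hypothesis equalities for $\boldsymbol{w}$ together with complementary slackness at $\tilde{\boldsymbol{w}}$ to cancel the $\lambda$- and $(b_i,1)$-pieces, and (iii) applying joint convexity of KL to pass from $\sum_i p_i D(\boldsymbol{w}_i\|\tilde{\boldsymbol{w}}_i)$ to the marginal KL. Your explicit ``KL plus linear residual'' split is just a more transparent repackaging of the paper's add-and-subtract of $\lambda D-\gamma T$; the two give identical expressions.

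The only genuine divergence is in how the surviving $\gamma$-term is shown non-negative. You cancel $\sum_j p_j\log(q_j/\tilde q_j)=0$ first (from the two saturation conditions), reducing the residual to $\gamma\bigl(\sum_j p_j q_j/\tilde q_j-1\bigr)$, and then apply Jensen to get $\sum_j p_j q_j/\tilde q_j\ge 1$. The paper keeps the $\log$-term in the expression and invokes the pointwise inequality $h(x)=x-1-\log x\ge 0$, so that $\gamma\sum_j p_j h(q_j/\tilde q_j)\ge 0$ term-by-term, without needing Jensen at all. The two are equivalent under the equality $\sum_j p_j\log(q_j/\tilde q_j)=0$ (which both proofs use), but the $h$-function route is marginally more elementary while your phrasing more sharply exposes where the saturated-perception hypothesis is consumed. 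Either presentation is fine; there is no gap.
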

\begin{proof}
Assume $a_j,b_i,\lambda,\gamma$ is the optimal multiplier of problem \eqref{eq3}, then $\tilde{w}_{ij}=r_je^{a_j-\lambda d_{ij}-b_i/p_i-1}$.
\begin{align*}
& f(\boldsymbol{w}, \boldsymbol{r}) - f(\tilde{\boldsymbol{w}}(\boldsymbol{r}), \boldsymbol{r})  \\
&= f(\boldsymbol{w}, \boldsymbol{r}) + \lambda D - \gamma T - f(\tilde{\boldsymbol{w}}, \boldsymbol{r}) - \lambda D + \gamma T \\
&= \left( f(\boldsymbol{w}, \boldsymbol{r}) + \lambda \sum_{i,j} p_i w_{ij} d_{ij}
   - \gamma \sum_{j} p_j \log\left(\sum_i p_i w_{ij}\right) \right) \\
&\quad - \left( f(\tilde{\boldsymbol{w}}, \boldsymbol{r}) + \lambda \sum_{i,j} p_i \tilde{w}_{ij} d_{ij}
   - \gamma \sum_{j} p_j \log\left(\sum_i p_i \tilde{w}_{ij}\right) \right) \\
&= \left( \sum_{i,j} p_i w_{ij} \log \frac{w_{ij}}{r_j e^{a_j - \lambda d_{ij}}}
   + \sum_{i,j} a_j p_i w_{ij}
   - \gamma \sum_{j} p_j \log\left(\sum_i p_i w_{ij}\right) \right) \\
&\quad - \left( \sum_{i,j} p_i \tilde{w}_{ij} \log \left( e^{a_j} e^{-b_i/p_i - 1} \right)
   - \gamma \sum_{j} p_j \log\left(\sum_i p_i \tilde{w}_{ij}\right) \right) \\
&= \sum_{i,j} p_i w_{ij} \log \frac{w_{ij}}{r_j e^{a_j - \lambda d_{ij}}}
   - \sum_{i} \left( \sum_j p_i \tilde{w}_{ij} \right) \log e^{-b_i/p_i - 1} \\
&\quad + \sum_{i,j} a_j (p_i w_{ij} - p_i \tilde{w}_{ij})
   - \gamma \sum_j p_j \log\left( \frac{\sum_i p_i w_{ij}}{\sum_i p_i \tilde{w}_{ij}} \right) \\
&= \sum_{i,j} p_i w_{ij} \log \frac{w_{ij}}{r_j e^{a_j - \lambda d_{ij}}}
   - \sum_{i} \left( \sum_j p_i w_{ij} \right) \log e^{-b_i/p_i - 1} \\
&\quad + \sum_{i,j} \frac{\gamma p_j}{\sum_i p_i \tilde{w}_{ij}} (p_i w_{ij} - p_i \tilde{w}_{ij})
   - \gamma \sum_j p_j \log\left( \frac{\sum_i p_i w_{ij}}{\sum_i p_i \tilde{w}_{ij}} \right) \\
&= \gamma \left( \sum_j p_j \frac{\sum_i p_i w_{ij}}{\sum_i p_i \tilde{w}_{ij}}
   - \sum_j p_j \log\left( \frac{\sum_i p_i w_{ij}}{\sum_i p_i \tilde{w}_{ij}} \right) - 1 \right) \\
&\quad + \sum_{i,j} p_i w_{ij} \log \frac{w_{ij}}{r_j e^{a_j - \lambda d_{ij}} e^{-b_i/p_i - 1}} \\
&\geq \sum_{i=1}^M p_i D\left(\boldsymbol{w}_i \,\|\, \tilde{\boldsymbol{w}}_i\right)
   \geq D\left(\sum_{i=1}^M p_i \boldsymbol{w}_i \,\|\, \sum_{i=1}^M p_i \tilde{\boldsymbol{w}}_i\right).
\end{align*}
here the first inequality we use the fact that $h(x) = x-1-\log(x)\geq 0$, thus

\begin{align*}
\sum_j p_j \frac{\sum_i p_i w_{ij}}{\sum_i p_i \hat{w}_{ij}}
- \sum_j p_j \log\!\left( \frac{\sum_i p_i w_{ij}}{\sum_i p_i \hat{w}_{ij}} \right) - 1 
= \sum_j p_j \, h\!\left( \frac{\sum_i p_i w_{ij}}{\sum_i p_i \hat{w}_{ij}} \right) 
\geq 0.
\end{align*}
The second inequality follows from the joint convexity of the KL divergence.

\end{proof}

\begin{lemma} \label{lem-2}
$$
f(\boldsymbol{w}, \boldsymbol{r})-f(\boldsymbol{w}, \tilde{\boldsymbol{r}}(\boldsymbol{w})) = D\left(\tilde{\boldsymbol{r}}(\boldsymbol{w}) \| \boldsymbol{r}\right),
$$ 
where $\tilde{\boldsymbol{r}}(\boldsymbol{w}) = \sum_{i=1}^N  p_iw_{i j}$, which is the minimizer.
\end{lemma}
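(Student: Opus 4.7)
The plan is to prove this equality by direct computation, exploiting the fact that the two objective values differ only in the $\log r_j$ term. The $\log w_{ij}$ part of $f$ does not depend on $\boldsymbol{r}$, so when we subtract $f(\boldsymbol{w},\tilde{\boldsymbol{r}}(\boldsymbol{w}))$ from $f(\boldsymbol{w},\boldsymbol{r})$, those terms cancel and we are left with
\[
f(\boldsymbol{w},\boldsymbol{r}) - f(\boldsymbol{w},\tilde{\boldsymbol{r}}(\boldsymbol{w})) = \sum_{i=1}^M \sum_{j=1}^N w_{ij} p_i \bigl[\log \tilde{r}_j(\boldsymbol{w}) - \log r_j\bigr].
\]

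Next I would interchange the order of summation and pull the $j$-only factor outside the inner sum. Because the bracketed quantity depends on $j$ only, the inner sum becomes $\sum_{i=1}^M w_{ij} p_i$, which by the definition $\tilde{r}_j(\boldsymbol{w}) = \sum_{i=1}^M p_i w_{ij}$ equals $\tilde{r}_j(\boldsymbol{w})$. Substituting yields
\[
f(\boldsymbol{w},\boldsymbol{r}) - f(\boldsymbol{w},\tilde{\boldsymbol{r}}(\boldsymbol{w})) = \sum_{j=1}^N \tilde{r}_j(\boldsymbol{w}) \bigl[\log \tilde{r}_j(\boldsymbol{w}) - \log r_j\bigr],
\]
which is exactly $D(\tilde{\boldsymbol{r}}(\boldsymbol{w}) \,\|\, \boldsymbol{r})$.

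To make the KL identification fully rigorous, I would note that both $\boldsymbol{r}$ and $\tilde{\boldsymbol{r}}(\boldsymbol{w})$ are genuine probability vectors: $\boldsymbol{r}$ lies in the simplex by the constraint $\sum_j r_j = 1$, while $\sum_j \tilde{r}_j(\boldsymbol{w}) = \sum_j \sum_i p_i w_{ij} = \sum_i p_i \sum_j w_{ij} = \sum_i p_i = 1$, using $\sum_j w_{ij}=1$. Finally, the claim that $\tilde{\boldsymbol{r}}(\boldsymbol{w})$ is the minimizer of $\boldsymbol{r} \mapsto f(\boldsymbol{w},\boldsymbol{r})$ subject to $\sum_j r_j=1$ was already established earlier in Section~\ref{sec:algorithm} via the Lagrangian first-order condition $r_j = \sum_i w_{ij} p_i / \eta$ together with $\eta = 1$, so no new argument is needed.

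There is essentially no obstacle here; the lemma is a one-line algebraic identity dressed up in notation. The only care required is in the bookkeeping of indices and in verifying that $\tilde{\boldsymbol{r}}(\boldsymbol{w})$ has unit mass so that writing $D(\cdot\|\cdot)$ is legitimate as the standard KL divergence rather than a generalized Bregman-type quantity.
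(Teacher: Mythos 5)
Your proposal is correct and follows the same route as the paper's own proof: cancel the $\log w_{ij}$ terms, collapse $\sum_i p_i w_{ij}$ into $\tilde r_j$, and recognize the result as a KL divergence. The extra remarks verifying that $\tilde{\boldsymbol r}(\boldsymbol w)$ has unit mass and recalling the first-order-condition argument for it being the minimizer are sound but not in the paper's one-paragraph computation.
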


\begin{proof}
\begin{multline*}
f(\boldsymbol{w}, \boldsymbol{r}) - f(\boldsymbol{w}, \tilde{\boldsymbol{r}}(\boldsymbol{w})) \\
= \sum_{i, j} w_{ij} p_i \left[ \log w_{ij} - \log r_j \right]
 - \sum_{i, j} w_{ij} p_i \left[ \log w_{ij} - \log \tilde{r}_j \right] \\
= \sum_{i, j} w_{ij} p_i \left( \log \tilde{r}_j - \log r_j \right)
= \sum_j \tilde{r}_j \log \frac{\tilde{r}_j}{r_j} 
= D\big(\tilde{\boldsymbol{r}}(\boldsymbol{w}) \,\|\, \boldsymbol{r}\big).
\end{multline*}
\end{proof}

\begin{lemma} \label{lem-3}
Following the iteration rule: $\bm{r}^n=\tilde{\boldsymbol{r}}(\boldsymbol{w}^{n-1}),$ $\bm{w}^n=\tilde{\boldsymbol{w}}(\boldsymbol{r}^{n}),$ we have
$$
f(\boldsymbol{w}^n, \boldsymbol{r}^n) \leq f(\boldsymbol{w}^{n-1}, \boldsymbol{r}^{n-1}).
$$ 
\end{lemma}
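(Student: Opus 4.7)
The plan is to chain together Lemmas \ref{lem-1} and \ref{lem-2}, each of which controls the decrease of $f$ along one half-step of the alternating minimization. By construction, $\bm{r}^n = \tilde{\bm{r}}(\bm{w}^{n-1})$ is the unconstrained minimizer of $f(\bm{w}^{n-1}, \cdot)$ on the simplex, while $\bm{w}^n = \tilde{\bm{w}}(\bm{r}^n)$ is the minimizer of $f(\cdot, \bm{r}^n)$ over the feasible set of problem \eqref{eq3}. It therefore suffices to show that neither half-step increases $f$, which is exactly the content of the two preceding lemmas.

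First I would invoke Lemma \ref{lem-2} with $\bm{w} = \bm{w}^{n-1}$ and $\bm{r} = \bm{r}^{n-1}$. Since $\tilde{\bm{r}}(\bm{w}^{n-1}) = \bm{r}^n$ by definition of the iteration, this yields
\[
f(\bm{w}^{n-1}, \bm{r}^{n-1}) - f(\bm{w}^{n-1}, \bm{r}^n) \;=\; D\!\left(\bm{r}^n \,\|\, \bm{r}^{n-1}\right) \;\geq\; 0,
\]
by non-negativity of the KL divergence. Next I would apply Lemma \ref{lem-1} with $\bm{w} = \bm{w}^{n-1}$ and $\bm{r} = \bm{r}^n$, noting that $\tilde{\bm{w}}(\bm{r}^n) = \bm{w}^n$. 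This produces
\[
f(\bm{w}^{n-1}, \bm{r}^n) - f(\bm{w}^n, \bm{r}^n) \;\geq\; D\!\left(\sum_{i=1}^M p_i \bm{w}_i^{n-1} \,\Big\|\, \sum_{i=1}^M p_i \bm{w}_i^n\right) \;\geq\; 0.
\]
Adding these two inequalities delivers $f(\bm{w}^n, \bm{r}^n) \leq f(\bm{w}^{n-1}, \bm{r}^{n-1})$, which is the claim.

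The only subtlety I anticipate is verifying the hypothesis of Lemma \ref{lem-1}, which demands that $\bm{w}^{n-1}$ saturate the distortion constraint $\sum_{ij} w_{ij} p_i d_{ij} = D$ and the perception constraint $\sum_j p_j \log(\sum_i p_i w_{ij}) = T$. This is the generic regime of interest, since $\bm{w}^{n-1} = \tilde{\bm{w}}(\bm{r}^{n-1})$ is produced by the inner loop of Algorithm~\ref{alg:KL} with strictly positive multipliers $\lambda, \gamma > 0$, so KKT complementary slackness forces both constraints to be active. In the degenerate case where one of the constraints is slack at the minimizer, the corresponding multiplier vanishes, the associated term simply drops out of the Lagrangian computation in the proof of Lemma \ref{lem-1}, and the same bound continues to hold; I would record this reduction as a short remark rather than repeat the algebra.
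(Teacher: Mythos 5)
Your proof is correct and follows essentially the same telescoping decomposition as the paper: split $f(\bm{w}^{n-1},\bm{r}^{n-1}) - f(\bm{w}^n,\bm{r}^n)$ across the intermediate point $(\bm{w}^{n-1},\bm{r}^n)$ and bound each half-step via Lemma~\ref{lem-2} and Lemma~\ref{lem-1} respectively, then use nonnegativity of the KL divergence. Your indices are in fact cleaner than the paper's (which contains an apparent typo writing $\bm{w}^{n+1}$ where $\bm{w}^{n-1}$ or $\bm{w}^n$ is meant), and your closing remark about verifying the saturation hypothesis of Lemma~\ref{lem-1} for $\bm{w}^{n-1}$ is a legitimate point of rigor that the paper silently elides; the complementary-slackness justification you give for it is the right one.
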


\begin{proof}
\begin{multline*}
f(\boldsymbol{w}^n, \boldsymbol{r}^n) - f(\boldsymbol{w}^{n-1}, \boldsymbol{r}^{n-1}) \\
= \left( f(\boldsymbol{w}^n, \boldsymbol{r}^n) - f(\boldsymbol{w}^{n-1}, \boldsymbol{r}^{n}) \right)
   + \left( f(\boldsymbol{w}^{n-1}, \boldsymbol{r}^{n}) - f(\boldsymbol{w}^{n-1}, \boldsymbol{r}^{n-1}) \right) \\
\leq -D\!\left(\sum_{i=1}^M p_i \boldsymbol{w}^n_i \,\big\|\, \sum_{i=1}^M p_i \boldsymbol{w}^{n+1}_i\right)
     -D\!\left(\boldsymbol{r}^{n} \,\|\, \boldsymbol{r}^{n-1}\right) \\
\leq 0.
\end{multline*}
Here, the inequality follows from Lemma \ref{lem-1} and \ref{lem-2}.
\end{proof}

\begin{theorem}
    Suppose $(\bm{w}^*,\bm{r}^*)$ is the global optimizer of \eqref{eq2}, and we have
    $$f\left(\boldsymbol{w}^{n}, \boldsymbol{r}^{n}\right)-f\left(\boldsymbol{w}^*, \boldsymbol{r}^*\right) \simeq O(1 / n)$$
\end{theorem}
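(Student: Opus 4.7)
The plan is to combine the three descent lemmas (\ref{lem-1}, \ref{lem-2}, \ref{lem-3}) into a telescoping bound on the suboptimality gap $g^n := f(\bm{w}^n, \bm{r}^n) - f(\bm{w}^*, \bm{r}^*)$, then exploit the monotonicity $g^{n+1}\le g^n$ from Lemma \ref{lem-3} to upgrade a $\sum_k g^k = O(1)$ bound into the pointwise rate $g^n = O(1/n)$. This is the Csisz\'ar-style argument that has become standard for Blahut--Arimoto and related alternating-minimization schemes.

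Before starting the telescoping step I would record one preliminary fact: at the global optimum $\bm{r}^* = \tilde{\bm{r}}(\bm{w}^*) = \sum_{i=1}^M p_i \bm{w}^*_i$. This follows immediately from the $\bm{r}$-update derivation, since for any fixed $\bm{w}$ the unique minimizer in $\bm{r}$ of $f(\bm{w},\cdot)$ subject to $\sum_j r_j = 1$ is exactly $\sum_i p_i w_{ij}$. With this in hand, I would apply Lemma \ref{lem-1} at $(\bm{w},\bm{r}) = (\bm{w}^*,\bm{r}^n)$ — noting that $\tilde{\bm{w}}(\bm{r}^n) = \bm{w}^n$ and $\sum_i p_i \bm{w}^n_i = \bm{r}^{n+1}$ by the iteration rule — to get
\begin{equation*}
f(\bm{w}^*,\bm{r}^n) - f(\bm{w}^n,\bm{r}^n) \geq D(\bm{r}^* \,\|\, \bm{r}^{n+1}),
\end{equation*}
and Lemma \ref{lem-2} at the same pair to get $f(\bm{w}^*,\bm{r}^n) - f(\bm{w}^*,\bm{r}^*) = D(\bm{r}^* \,\|\, \bm{r}^n)$. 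Subtracting the two identities yields the telescoping inequality
\begin{equation*}
g^n \;\leq\; D(\bm{r}^* \,\|\, \bm{r}^n) - D(\bm{r}^* \,\|\, \bm{r}^{n+1}).
\end{equation*}

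Summing from $k = 1$ to $n$ and using $D(\bm{r}^* \,\|\, \bm{r}^{n+1}) \geq 0$, the right-hand side collapses to at most $D(\bm{r}^* \,\|\, \bm{r}^1)$, which is a finite constant depending only on the initialization. Lemma \ref{lem-3} guarantees $g^n$ is non-increasing, hence $n\, g^n \leq \sum_{k=1}^n g^k \leq D(\bm{r}^* \,\|\, \bm{r}^1)$, which delivers $g^n \leq D(\bm{r}^* \,\|\, \bm{r}^1)/n = O(1/n)$.

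The main obstacle I expect is the applicability of Lemma \ref{lem-1} at $\bm{w} = \bm{w}^*$: the lemma is stated under the assumption that the distortion and perception constraints hold with equality, so I would first verify that both \eqref{eq2_c} and \eqref{eq2_d} are active at the global optimum (which is the generic case for informative $(D,P)$ pairs, since otherwise the mutual information could be decreased further). If one or both constraints are inactive, the associated multiplier vanishes and the proof of Lemma \ref{lem-1} still goes through with the corresponding term dropped, but this boundary situation should be flagged. A secondary technical point is that Lemma \ref{lem-1} uses the exact subproblem minimizer $\tilde{\bm{w}}(\bm{r}^n)$, whereas the algorithm computes it by inner Newton iterations; I would invoke Theorem \ref{thm-1} to treat the inner loop as exact, or otherwise carry an inner-loop error term through the telescoping sum and show it is negligible.
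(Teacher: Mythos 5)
Your proposal matches the paper's proof essentially line for line: the same decomposition of $f(\bm{w}^n,\bm{r}^n)-f(\bm{w}^*,\bm{r}^*)$ into two differences controlled by Lemma~\ref{lem-1} and Lemma~\ref{lem-2}, the same identity $\sum_i p_i\bm{w}^n_i=\bm{r}^{n+1}$ to turn the pythagorean bound into a telescoping sum, and the same use of the monotonicity from Lemma~\ref{lem-3} to pass from $\sum_k g^k\le C$ to $n g^n\le C$. Your caveat about Lemma~\ref{lem-1} requiring both constraints to be active at $\bm{w}^*$ is a real gap that the paper glosses over, and your observation that a vanishing multiplier salvages the argument when a constraint is slack is the correct way to close it.
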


\begin{proof}
Noticing $\bm{r}^* = \tilde{\boldsymbol{r}}(\boldsymbol{w}^*)$ and $\bm{r}^{n+1} = \tilde{\boldsymbol{r}}(\boldsymbol{w}^n)$, also under Lemma \ref{lem-1} and \ref{lem-2}, we have
\begin{multline*}
f\!\left(\boldsymbol{w}^{n}, \boldsymbol{r}^{n}\right)
- f\!\left(\boldsymbol{w}^*, \boldsymbol{r}^*\right) \\
= \left( f\!\left(\boldsymbol{w}^{n}, \boldsymbol{r}^{n}\right)
      - f\!\left(\boldsymbol{w}^*, \boldsymbol{r}^{n}\right) \right)
  + \left( f\!\left(\boldsymbol{w}^*, \boldsymbol{r}^{n}\right)
      - f\!\left(\boldsymbol{w}^*, \boldsymbol{r}^*\right) \right) \\
\leq -D\!\left(\sum_i p_i \boldsymbol{w}_i^*
        \,\big\|\, \sum_i p_i \boldsymbol{w}_i^{n}\right)
     + D\!\left(\boldsymbol{r}^* \,\|\, \boldsymbol{r}^{n}\right) \\
= -D\!\left(\boldsymbol{r}^* \,\|\, \boldsymbol{r}^{n+1}\right)
   + D\!\left(\boldsymbol{r}^* \,\|\, \boldsymbol{r}^{n}\right)
\end{multline*}
for $n=1,2, \cdots$. Then, by summing up these inequalities, we have
\begin{multline*}
\sum_{k=1}^n\left(f\left(\boldsymbol{w}^{k}, \boldsymbol{r}^{k}\right)-f\left(\boldsymbol{w}^*, \boldsymbol{r}^*\right)\right) 
\\ \leq D\left(\boldsymbol{r}^* \| \boldsymbol{r}^{1}\right)-D\left(\boldsymbol{r}^* \| \boldsymbol{r}^{n+1}\right) \leq D\left(\boldsymbol{r}^* \| \boldsymbol{r}^{1}\right) \triangleq C,
\end{multline*}
where $C$ is a constant. 

According to Lemma \ref{lem-3}, we obtain
\begin{equation*}
\begin{aligned}
0 \leq n\left(f\left(\boldsymbol{w}^{n}, \boldsymbol{r}^{n}\right)-f\left(\boldsymbol{w}^*, \boldsymbol{r}^*\right)\right) \leq \sum_{k=1}^n\left(f\left(\boldsymbol{w}^{k}, \boldsymbol{r}^{k}\right)-f\left(\boldsymbol{w}^*, \boldsymbol{r}^*\right)\right) \leq C,
\end{aligned}
\end{equation*}
i.e., $f\left(\boldsymbol{w}^{n}, \boldsymbol{r}^{n}\right)-f\left(\boldsymbol{w}^*, \boldsymbol{r}^*\right) \simeq O(1 / n)$.
\end{proof}

\subsection{Wasserstein metric case}

Similar to the KL case, we first establish the convergence of the inner iterations.
\begin{lemma}\label{lem-5}
\begin{multline*}
F_2(\bm{\alpha},\bm{\hat{\alpha}},\bm{\beta},\lambda,\gamma) =
\sum_{i=1}^M \sum_{j=1}^N p_i r_{j}
    e^{\beta_j - \frac{\alpha_i}{p_i} - \gamma d_{ij} - 1}
+ \varepsilon \sum_{i=1}^M \sum_{j=1}^N
    e^{-\frac{\beta_j}{\varepsilon} - \frac{\hat{\alpha}_i}{\varepsilon}
      - \frac{\lambda c_{ij}}{\varepsilon} - 1} \\
+ P\lambda + D\gamma
+ \sum_{i=1}^M \alpha_i
+ \sum_{i=1}^M \hat{\alpha}_i
\end{multline*}
is convex over $\bm{\alpha},\bm{\hat{\alpha}},\bm{\beta}, \lambda$ and $\gamma$.
\end{lemma}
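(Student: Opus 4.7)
The plan is to decompose $F_2$ into three groups of terms and establish convexity of each in the joint variable $(\bm{\alpha},\bm{\hat{\alpha}},\bm{\beta},\lambda,\gamma)$, then invoke the fact that a nonnegative combination of convex functions is convex. This follows the template of Lemma~\ref{lem-0}, but it is noticeably simpler since $F_2$ contains no $\log$ terms; every nontrivial summand is the exponential of an affine function of the dual variables, plus purely linear pieces.

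For the first block, I would observe that for each fixed pair $(i,j)$ the exponent $\beta_j - \alpha_i/p_i - \gamma d_{ij} - 1$ is an affine function of $(\bm{\alpha},\bm{\hat{\alpha}},\bm{\beta},\lambda,\gamma)$ (it simply does not depend on $\bm{\hat{\alpha}}$ and $\lambda$, which is irrelevant). Since $t\mapsto e^{t}$ is convex, the composition is convex on the full joint domain, and multiplying by the nonnegative weight $p_i r_j$ and summing over $i,j$ preserves convexity. The second block is treated identically: $-\beta_j/\varepsilon - \hat{\alpha}_i/\varepsilon - \lambda c_{ij}/\varepsilon - 1$ is again affine in all variables, so $e^{\,\cdot\,}$ applied to it is convex, and the prefactor $\varepsilon>0$ retains this property after summation. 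The remaining piece $P\lambda + D\gamma + \sum_i \alpha_i + \sum_i \hat{\alpha}_i$ is linear, hence convex, and summing the three blocks yields the claim.

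The only thing that really needs checking is that the coefficients attached to the convex exponentials are indeed nonnegative: $p_i, r_j \geq 0$ because they are probability masses (with $r_j$ updated to $\sum_i p_i w_{ij}$ in the outer loop), and $\varepsilon > 0$ by hypothesis on the entropy regularization in~\eqref{wasserstein}. These all hold by construction, so I do not anticipate a genuine obstacle — in contrast to Lemma~\ref{lem-0}, where one had to certify joint convexity of $\gamma\log\gamma - \gamma\log a_j$ through an explicit $2\times 2$ Hessian computation, here every piece of $F_2$ is already the composition of a scalar convex function with an affine map. If a Hessian-level certificate is preferred, each exponential term $e^{\langle v, x\rangle + c}$ has Hessian $e^{\langle v, x\rangle + c}\, v v^{\top}$, which is rank-one and positive semidefinite, so the sum is manifestly positive semidefinite; this is exactly the structural analogue of the rank-one factorization used inside the proof of Lemma~\ref{lem-0}.
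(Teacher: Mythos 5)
Your proof is correct and follows essentially the same route as the paper: decompose $F_2$ into the two families of exponential terms plus the linear tail, argue each exponential is convex, and conclude by stability of convexity under nonnegative combinations. The paper certifies convexity of the exponentials by pointing back to the rank-one Hessian computation in Lemma~\ref{lem-0}, whereas you additionally note the cleaner and equivalent observation that each exponent is affine in the joint dual variables so that $e^{(\cdot)}$ composed with an affine map is automatically convex — a minor streamlining, not a different argument.
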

\begin{proof}
    In Lemma \ref{lem-0}, we proved the convexity of $e^{\beta_j-\frac{{\alpha}_i}{p_i}-\gamma d_{ij}-1}$ and $e^{-\frac{\beta_j}{\epsilon}-\frac{\hat{\alpha}_i}{\varepsilon }-\frac{\lambda c_{ij}}{\epsilon} -1}$. Since the sum of convex function is convex, we have $F_2(\bm{\alpha},\bm{\hat{\alpha}},\bm{\beta},\lambda,\gamma)$ is convex with respect to $\bm{\alpha},\bm{\hat{\alpha}},\bm{\beta}, \lambda$ and $\gamma$.
\end{proof}
\begin{theorem}
The inner iteration method is convergent.
\end{theorem}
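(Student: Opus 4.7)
The plan is to parallel the argument of Theorem~\ref{thm-1} from the KL case: reinterpret the inner iteration of Algorithm~\ref{alg:OT_ibp} as an exact block coordinate minimization of the convex potential $F_2(\bm{\alpha},\bm{\hat{\alpha}},\bm{\beta},\lambda,\gamma)$ established in Lemma~\ref{lem-5}, and then invoke the classical convergence theorem for alternating minimization of convex functions. Convexity is already in hand, so the proof reduces to matching each algorithmic update with an exact coordinate-wise minimizer of $F_2$.

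First, I would verify that every update rule solves a first-order stationarity condition of $F_2$ along the corresponding coordinate block. Setting $\partial F_2/\partial \alpha_i = 0$, $\partial F_2/\partial \hat{\alpha}_i = 0$, and $\partial F_2/\partial \beta_j = 0$ recovers exactly the closed-form updates for $\bm{\alpha}$, $\bm{\hat{\alpha}}$, and $\bm{\beta}$ derived above (after the substitutions $\phi_i = e^{-\alpha_i/p_i-1}$ and $\hat{\phi}_i = e^{-\hat{\alpha}_i/\varepsilon-1}$ used in the algorithm). Setting $\partial F_2/\partial \gamma = 0$ and $\partial F_2/\partial \lambda = 0$ reduces to the scalar equations $G(\gamma)=0$ and $F(\lambda)=0$, both of which admit unique positive roots thanks to the strict monotonicity $G'(\gamma)\le 0$ and $F'(\lambda)\le 0$ shown in the preceding derivation, so Newton's method locates these minimizers.

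Next, I would group the coordinates into two macro-blocks, $(\bm{\alpha},\bm{\hat{\alpha}},\bm{\beta})$ and $(\gamma,\lambda)$, in direct analogy with the KL proof. Within the first block the three sub-updates can themselves be cycled, since each is the exact minimizer of $F_2$ along its coordinate direction; within the second block the minimization in $\gamma$ and $\lambda$ is obtained by Newton's method. Because $F_2$ is convex and coercive in each block (the exponentials dominate the linear terms $P\lambda + D\gamma + \sum_i \alpha_i + \sum_i \hat{\alpha}_i$ at infinity), each block subproblem has a unique minimizer, and the standard alternating minimization convergence theorem for convex functions yields a monotonically decreasing sequence of $F_2$ values converging to the global minimum.

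The main obstacle is the inexactness of the Newton subroutines for $\gamma$ and $\lambda$: strictly speaking, alternating minimization requires exact block minimizers, whereas Newton's method terminates at a tolerance. However, since $G$ and $F$ are strictly monotone scalar functions with unique positive roots, the Newton residuals decay superlinearly and can be absorbed into the exact-minimization framework by running enough Newton steps; alternatively one can appeal to the inexact block coordinate descent results for convex objectives. A secondary subtlety is confirming the positivity of the roots (so that Newton operates in $\mathbb{R}^+$), which follows from the sign of $G$ and $F$ at the endpoints of $\mathbb{R}^+$. With these pieces assembled, the argument mirrors the proof of Theorem~\ref{thm-1} essentially verbatim and establishes convergence of the inner iteration.
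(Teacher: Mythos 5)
Your proposal is correct and follows essentially the same route as the paper: the paper's own proof is just the one-line remark that the inner iteration coincides with alternating minimization of the convex $F_2$ from Lemma~\ref{lem-5}, mirroring Theorem~\ref{thm-1}. You have simply spelled out the details the paper leaves implicit (verifying the first-order conditions for each block, coercivity, and the inexactness of the Newton subroutines), which is a faithful and somewhat more careful rendering of the same argument rather than a different one.
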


\begin{proof}
    Same to the proof of Theorem \ref{thm-1}, the iteration method coincides with alternatively minimizing the convex function $F_2(\bm{\alpha},\bm{\hat{\alpha}},\bm{\beta},\lambda,\gamma)$, so it is convergent.
\end{proof}

Next, we will prove the convergence of the outer iteration. 
\begin{lemma} \label{lem-6}
For any $\boldsymbol{w}$ satisfying $\sum_{i=1}^M \sum_{j=1}^N w_{i j} p_i d_{i j} = D$ and any $\boldsymbol{\Pi}$ satisfying $\sum_{i=1}^M \sum_{j=1}^N \Pi_{i j}  c_{i j} = P$ and $\sum_{i=1}^M p_i w_{ij}=\sum_{i=1}^M\Pi_{ij}$ and setting $
g(\boldsymbol{w}, \boldsymbol{\Pi}, \boldsymbol{r}) = \sum_{i=1}^M \sum_{j=1}^N\left(w_{i j} p_i\right)\left[\log w_{i j}-\log r_j\right] + \varepsilon H(\bm{\Pi}),$
we have
\begin{multline*}
g(\boldsymbol{w}, \boldsymbol{\Pi}, \boldsymbol{r})
 - g(\tilde{\boldsymbol{w}}(\boldsymbol{r}),
      \tilde{\boldsymbol{\Pi}}(\boldsymbol{r}),
      \boldsymbol{r}) \\
\geq D\!\left(\sum_{i=1}^M p_i \boldsymbol{w}_i
        \,\big\|\, \sum_{i=1}^M p_i \tilde{\boldsymbol{w}}_i\right)
   + \varepsilon\, D\!\left(\sum_{i=1}^M \boldsymbol{\Pi}_i
        \,\big\|\, \sum_{i=1}^M \tilde{\boldsymbol{\Pi}}_i\right)
\end{multline*}

where the expression of $\tilde{\boldsymbol{w}}(\boldsymbol{r})$ is \eqref{w_w} and the expression of $\tilde{\boldsymbol{\Pi}}(\boldsymbol{r})$ is \eqref{Pi_w}, and $\bm{\alpha},\bm{\hat{\alpha}},\bm{\beta},\lambda,\gamma$ are the optimal solution of the inner iteration and are related to $\bm{r}$.
\end{lemma}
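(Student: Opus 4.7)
The plan is to adapt the Bregman-divergence argument used in Lemma \ref{lem-1} to the Wasserstein subproblem \eqref{wasserstein1}. The pair $(\tilde{\bm{w}}(\bm{r}),\tilde{\bm{\Pi}}(\bm{r}))$ is, by construction, the (unique) minimizer of the Lagrangian $\mathcal{L}(\bm{w},\bm{\Pi};\bm{\alpha},\hat{\bm{\alpha}},\bm{\beta},\lambda,\gamma)$ in \eqref{Lagrangian_w} when the multipliers are taken to be the KKT multipliers, since strict convexity of the objective in $(\bm{w},\bm{\Pi})$ is guaranteed by the $x\log x$ terms and the entropy regularization. Under the hypotheses of the lemma, $(\bm{w},\bm{\Pi})$ satisfies every equality constraint of \eqref{wasserstein1}: the normalizations $\sum_j w_{ij}=1$ and $\sum_j \Pi_{ij}=p_i$, the tight distortion $=D$, the tight perception $=P$, and the coupling $\sum_i p_i w_{ij}=\sum_i \Pi_{ij}$. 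Consequently every multiplier term in $\mathcal{L}$ evaluates to zero at both $(\bm{w},\bm{\Pi})$ and $(\tilde{\bm{w}},\tilde{\bm{\Pi}})$, so
$$g(\bm{w},\bm{\Pi},\bm{r})-g(\tilde{\bm{w}},\tilde{\bm{\Pi}},\bm{r}) \;=\; \mathcal{L}(\bm{w},\bm{\Pi})-\mathcal{L}(\tilde{\bm{w}},\tilde{\bm{\Pi}}).$$

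The next step exploits the stationarity $\nabla_{\bm{w},\bm{\Pi}}\mathcal{L}(\tilde{\bm{w}},\tilde{\bm{\Pi}})=0$. Since the multiplier block of $\mathcal{L}$ is linear in $(\bm{w},\bm{\Pi})$ and contributes zero to the Bregman divergence, what remains is the Bregman divergence of the nonlinear part $g$ alone. For the generator $h(x)=x\log x$ one has the elementary identity $B_h(x;y)=x\log(x/y)-x+y$; summing this with the appropriate prefactors ($p_i$ for the $\bm{w}$-block, $\varepsilon$ for the $\bm{\Pi}$-block) and using $\sum_j w_{ij}=\sum_j\tilde{w}_{ij}=1$ and $\sum_j\Pi_{ij}=\sum_j\tilde{\Pi}_{ij}=p_i$ makes the linear correction terms $-x+y$ cancel row by row. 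The identity therefore collapses cleanly to
$$g(\bm{w},\bm{\Pi},\bm{r})-g(\tilde{\bm{w}},\tilde{\bm{\Pi}},\bm{r}) \;=\; \sum_{i=1}^M p_i\, D(\bm{w}_i\|\tilde{\bm{w}}_i) \;+\; \varepsilon\sum_{i=1}^M D(\bm{\Pi}_i\|\tilde{\bm{\Pi}}_i).$$

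The final step is the log-sum inequality (equivalently, joint convexity of the KL divergence), applied block by block. For the $\bm{w}$-block, the weights $\{p_i\}$ form a probability vector and one obtains the target term $D(\sum_i p_i\bm{w}_i\,\|\,\sum_i p_i\tilde{\bm{w}}_i)$ at once. For the $\bm{\Pi}$-block, the column-wise sums $\sum_i\Pi_{ij}$ and $\sum_i\tilde{\Pi}_{ij}$ are both probability vectors on $\hat{\mathcal{X}}$ (because $\sum_{i,j}\Pi_{ij}=\sum_i p_i=1$), and the log-sum inequality applied column-by-column gives $\sum_i D(\bm{\Pi}_i\|\tilde{\bm{\Pi}}_i)\geq D(\sum_i\bm{\Pi}_i\|\sum_i\tilde{\bm{\Pi}}_i)$. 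Adding the two bounds produces the inequality claimed by the lemma. The main obstacle, relative to the KL proof, is the bookkeeping around the coupling multiplier $\bm{\beta}$ that links the $\bm{w}$ and $\bm{\Pi}$ blocks: the equality $\sum_i p_i w_{ij}=\sum_i\Pi_{ij}$ is precisely what makes the $\bm{\beta}$-terms vanish identically at both evaluation points, and it is the only genuinely new ingredient here. Once that cancellation is carried out, the remaining argument is in fact slightly shorter than the KL case, since no auxiliary $h(x)=x-1-\log x$ term arises from a logarithmic perception constraint.
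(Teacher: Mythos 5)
Your proof is correct and follows essentially the same route as the paper's: both arguments exploit the tight constraints at $(\bm{w},\bm{\Pi})$ together with stationarity and complementary slackness at $(\tilde{\bm{w}},\tilde{\bm{\Pi}})$ to show the gap equals exactly $\sum_i p_i D(\bm{w}_i\|\tilde{\bm{w}}_i)+\varepsilon D(\bm{\Pi}\|\tilde{\bm{\Pi}})$, and then apply the log-sum inequality to pass to the marginals. Your Bregman-divergence framing packages the same computation more conceptually than the paper's explicit term-by-term manipulation, but the substance of the argument is identical.
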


\begin{proof}
    Assume $\bm{\alpha},\bm{\hat{\alpha}},\bm{\beta},\lambda,\gamma$ are the optimal solution of the inner iteration, then $\tilde{w}_{ij}= r_je^{\beta_j-\frac{{\alpha}_i}{p_i}-\gamma d_{ij}-1}$ and $\tilde{\Pi_{ij}} = e^{-\frac{\beta_j}{\epsilon}-\frac{\hat{\alpha}_i}{\varepsilon }-\frac{\lambda c_{ij}}{\epsilon} -1}$, and we have
\begin{equation*}
\renewcommand{\arraystretch}{1.4}
\begin{aligned}
& g(\boldsymbol{w}, \boldsymbol{\Pi}, \boldsymbol{r})
 - g(\tilde{\boldsymbol{w}}(\boldsymbol{r}), \tilde{\boldsymbol{\Pi}}(\boldsymbol{r}), \boldsymbol{r}) \\
&= g(\boldsymbol{w}, \boldsymbol{\Pi}, \boldsymbol{r}) + \gamma D + \lambda P
 - g(\tilde{\boldsymbol{w}}(\boldsymbol{r}), \tilde{\boldsymbol{\Pi}}(\boldsymbol{r}), \boldsymbol{r}) - \gamma D - \lambda P \\
&= g(\boldsymbol{w}, \boldsymbol{\Pi}, \boldsymbol{r})
   + \gamma \sum_{i,j} p_i w_{ij} d_{ij}
   + \lambda \sum_{i,j} c_{ij} \Pi_{ij} \\
&\quad - g(\tilde{\boldsymbol{w}}(\boldsymbol{r}), \tilde{\boldsymbol{\Pi}}(\boldsymbol{r}), \boldsymbol{r})
   - \gamma \sum_{i,j} p_i \tilde{w}_{ij} d_{ij}
   - \lambda \sum_{i,j} c_{ij} \tilde{\Pi}_{ij} \\
&= \sum_{i,j} p_i w_{ij} \log \frac{w_{ij}}{r_j e^{-\gamma d_{ij}}}
   + \varepsilon \sum_{i,j} \Pi_{ij} \log \frac{\Pi_{ij}}{e^{-\lambda c_{ij}/\varepsilon}} \\
&\quad - \sum_{i,j} p_i \tilde{w}_{ij} \log\!\left(e^{\beta_j} e^{-\alpha_i/p_i - 1}\right)
   - \varepsilon \sum_{i,j} \tilde{\Pi}_{ij} \log\!\left(e^{-\beta_j/\varepsilon - \hat{\alpha}/\varepsilon - 1}\right) \\
&= \sum_{i,j} p_i w_{ij} \log \frac{w_{ij}}{r_j e^{-\gamma d_{ij}}}
   + \varepsilon \sum_{i,j} \Pi_{ij} \log \frac{\Pi_{ij}}{e^{-\lambda c_{ij}/\varepsilon}} \\
&\quad - \sum_i p_i \left(\sum_j \tilde{w}_{ij}\right) \log\!\left(e^{-\alpha_i/p_i - 1}\right)
   - \varepsilon \sum_i \left(\sum_j \tilde{\Pi}_{ij}\right) \log\!\left(e^{-\hat{\alpha}_i/\varepsilon - 1}\right) \\
&\quad \quad - \sum_j \beta_j \left(\sum_i p_i \tilde{w}_{ij} - \sum_i \tilde{\Pi}_{ij}\right) \\
&= \sum_{i,j} p_i w_{ij} \log \frac{w_{ij}}{r_j e^{-\gamma d_{ij}}}
   + \varepsilon \sum_{i,j} \Pi_{ij} \log \frac{\Pi_{ij}}{e^{-\lambda c_{ij}/\varepsilon}} \\
&\quad - \sum_i p_i \left(\sum_j w_{ij}\right) \log\!\left(e^{-\alpha_i/p_i - 1}\right)
   - \varepsilon \sum_i \left(\sum_j \Pi_{ij}\right) \log\!\left(e^{-\hat{\alpha}_i/\varepsilon - 1}\right) \\
&= \sum_{i,j} p_i w_{ij} \log \frac{w_{ij}}{r_j e^{-\gamma d_{ij} - \alpha_i/p_i - 1}}
   + \varepsilon \sum_{i,j} \Pi_{ij} \log \frac{\Pi_{ij}}{e^{-\lambda c_{ij}/\varepsilon - \hat{\alpha}_i/\varepsilon - 1}} \\
&= \sum_{i,j} p_i w_{ij} \log \frac{w_{ij}}{r_j e^{-\gamma d_{ij} - \alpha_i/p_i - 1 + \beta_j}}
   + \varepsilon \sum_{i,j} \Pi_{ij} \log \frac{\Pi_{ij}}{e^{-\lambda c_{ij}/\varepsilon - (\hat{\alpha}_i + \beta_j)/\varepsilon - 1}} \\
&= \sum_{i,j} p_i w_{ij} \log \frac{w_{ij}}{\tilde{w}_{ij}}
   + \varepsilon \sum_{i,j} \Pi_{ij} \log \frac{\Pi_{ij}}{\tilde{\Pi}_{ij}} \\
&= \sum_{i=1}^M p_i D\!\left(\boldsymbol{w}_i \,\|\, \tilde{\boldsymbol{w}}_i\right)
   + \varepsilon D\!\left(\boldsymbol{\Pi} \,\|\, \tilde{\boldsymbol{\Pi}}\right) \\
&\geq D\!\left(\sum_{i=1}^M p_i \boldsymbol{w}_i \,\big\|\, \sum_{i=1}^M p_i \tilde{\boldsymbol{w}}_i\right)
   + \varepsilon D\!\left(\sum_{i=1}^M \boldsymbol{\Pi}_i \,\big\|\, \sum_{i=1}^M \tilde{\boldsymbol{\Pi}}_i\right).
\end{aligned}
\end{equation*}
Here the inequality follows from the joint convexity and log-sum inequality of the KL divergence.

\end{proof}

\begin{lemma}  \label{lem-7}
$$
g(\boldsymbol{w}, \boldsymbol{\Pi}, \boldsymbol{r})-g(\boldsymbol{w}, \boldsymbol{\Pi}, \tilde{\boldsymbol{r}}(\boldsymbol{w})) = D\left(\tilde{\boldsymbol{r}}(\boldsymbol{w}) \| \boldsymbol{r}\right),
$$ 
where $\tilde{\boldsymbol{r}}(\boldsymbol{w}) = \bm{r} = \sum_{i=1}^N  p_iw_{i j}.$
\end{lemma}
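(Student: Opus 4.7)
The plan is to exploit the observation that this lemma is essentially a carbon copy of Lemma \ref{lem-2} from the KL case: the only difference between $g(\boldsymbol{w},\boldsymbol{\Pi},\boldsymbol{r})$ and $g(\boldsymbol{w},\boldsymbol{\Pi},\tilde{\boldsymbol{r}}(\boldsymbol{w}))$ lies in the $\log r_j$ term, since the entropy regularization $\varepsilon H(\boldsymbol{\Pi})$ and the $w_{ij}\log w_{ij}$ contributions do not depend on $\boldsymbol{r}$. Consequently, the proof reduces to a short direct computation with no new ingredients beyond the definition of KL divergence.

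First, I would write out the difference explicitly. Both $\varepsilon H(\boldsymbol{\Pi})$ terms cancel, and so do the $\sum_{i,j} w_{ij} p_i \log w_{ij}$ terms, leaving
\begin{equation*}
g(\boldsymbol{w},\boldsymbol{\Pi},\boldsymbol{r}) - g(\boldsymbol{w},\boldsymbol{\Pi},\tilde{\boldsymbol{r}}(\boldsymbol{w}))
= \sum_{i=1}^M\sum_{j=1}^N w_{ij} p_i\bigl(\log \tilde{r}_j - \log r_j\bigr).
\end{equation*}
Next, I would swap the order of summation and use the defining identity $\tilde{r}_j = \sum_{i=1}^M p_i w_{ij}$ to collapse the inner sum, obtaining
\begin{equation*}
\sum_{j=1}^N \Bigl(\sum_{i=1}^M p_i w_{ij}\Bigr)\log\frac{\tilde{r}_j}{r_j}
= \sum_{j=1}^N \tilde{r}_j \log\frac{\tilde{r}_j}{r_j}
= D\bigl(\tilde{\boldsymbol{r}}(\boldsymbol{w})\,\|\,\boldsymbol{r}\bigr),
\end{equation*}
which is the claimed equality.

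There is no genuine obstacle in this argument; the only point that might deserve a brief comment is that $\tilde{\boldsymbol{r}}(\boldsymbol{w}) = \sum_i p_i w_{ij}$ is indeed the minimizer of $g(\boldsymbol{w},\boldsymbol{\Pi},\cdot)$ subject to $\sum_j r_j = 1$, but this has already been derived in Section~\ref{sec:algorithm} via the Lagrangian analysis yielding \eqref{from_of_r1} and the subsequent verification that the multiplier $\eta$ equals $1$. Since the $\boldsymbol{\Pi}$-dependence of $g$ is separable from $\boldsymbol{r}$, the same minimizer applies here, and the lemma is established by the computation above.
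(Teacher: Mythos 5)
Your proof is correct and follows essentially the same direct computation as the paper: write out the difference, observe that the $\varepsilon H(\boldsymbol{\Pi})$ and $w_{ij}\log w_{ij}$ contributions cancel, swap the summation order, and identify $\sum_j \tilde{r}_j \log(\tilde{r}_j/r_j)$ as the KL divergence. Your closing remark about $\tilde{\boldsymbol{r}}$ being the constrained minimizer is a helpful sanity check but is not needed for the identity itself.
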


\begin{proof}
\begin{multline*}
g(\boldsymbol{w}, \boldsymbol{\Pi}, \boldsymbol{r})
 - g(\boldsymbol{w}, \boldsymbol{\Pi}, \tilde{\boldsymbol{r}}(\boldsymbol{w})) \\
= \sum_{i,j} w_{ij} p_i \left[\log w_{ij} - \log r_j \right]
 - \sum_{i,j} w_{ij} p_i \left[\log w_{ij} - \log \tilde{r}_j \right] \\
= \sum_{i,j} w_{ij} p_i \left( \log \tilde{r}_j - \log r_j \right) \\
= \sum_j \tilde{r}_j \log \frac{\tilde{r}_j}{r_j}
= D\!\left(\tilde{\boldsymbol{r}}(\boldsymbol{w}) \,\|\, \boldsymbol{r}\right)
\end{multline*}
\end{proof}

\begin{lemma}\label{lem-8}
Following the iteration rule: $\bm{r}^n=\tilde{\boldsymbol{r}}(\boldsymbol{w}^{n-1}),$ $\bm{w}^n=\tilde{\boldsymbol{w}}(\boldsymbol{r}^{n}),$ and $\bm{\Pi}^n=\tilde{\boldsymbol{\Pi}}(\boldsymbol{r}^{n})$, we have
$$
g(\boldsymbol{w}^n, \boldsymbol{\Pi}^n, \boldsymbol{r}^n) \leq g(\boldsymbol{w}^{n-1}, \boldsymbol{\Pi}^{n-1}, \boldsymbol{r}^{n-1}).
$$ 
\end{lemma}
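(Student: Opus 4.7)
The plan is to mirror the KL-case argument of Lemma~\ref{lem-3}: insert the intermediate configuration $(\bm{w}^{n-1},\bm{\Pi}^{n-1},\bm{r}^{n})$ between the two consecutive iterates and bound each half of the resulting telescoping difference using Lemmas~\ref{lem-6} and~\ref{lem-7}.

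Concretely, I would first split
\[
g(\bm{w}^n,\bm{\Pi}^n,\bm{r}^n) - g(\bm{w}^{n-1},\bm{\Pi}^{n-1},\bm{r}^{n-1}) = A + B,
\]
with $A = g(\bm{w}^n,\bm{\Pi}^n,\bm{r}^n) - g(\bm{w}^{n-1},\bm{\Pi}^{n-1},\bm{r}^n)$ the $(\bm{w},\bm{\Pi})$-update piece and $B = g(\bm{w}^{n-1},\bm{\Pi}^{n-1},\bm{r}^n) - g(\bm{w}^{n-1},\bm{\Pi}^{n-1},\bm{r}^{n-1})$ the $\bm{r}$-update piece. For $A$, I would apply Lemma~\ref{lem-6} at the reconstruction $\bm{r}^n$, taking $(\bm{w}^{n-1},\bm{\Pi}^{n-1})$ as the feasible ``arbitrary'' pair and using $(\bm{w}^n,\bm{\Pi}^n)=(\tilde{\bm{w}}(\bm{r}^n),\tilde{\bm{\Pi}}(\bm{r}^n))$ on the minimizer side. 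This yields
\[
-A \;\geq\; D\!\left(\sum_{i=1}^M p_i \bm{w}^{n-1}_i \,\Big\|\, \sum_{i=1}^M p_i \bm{w}^n_i\right) + \varepsilon\, D\!\left(\sum_{i=1}^M \bm{\Pi}^{n-1}_i \,\Big\|\, \sum_{i=1}^M \bm{\Pi}^n_i\right) \;\geq\; 0,
\]
so $A\le 0$. For $B$, Lemma~\ref{lem-7} applied with $(\bm{w}^{n-1},\bm{\Pi}^{n-1})$, together with the outer-iteration rule $\bm{r}^n = \tilde{\bm{r}}(\bm{w}^{n-1})$, gives $-B = D(\bm{r}^n\,\|\,\bm{r}^{n-1}) \geq 0$. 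Summing the two bounds immediately delivers the desired monotonicity, in fact with an explicit quantitative gap in terms of three KL-divergences that should feed directly into the subsequent $O(1/n)$ rate derivation for the outer loop.

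The main obstacle is that Lemma~\ref{lem-6} is stated under the \emph{equality} hypotheses $\sum_{ij}p_i w^{n-1}_{ij}d_{ij}=D$ and $\sum_{ij}\Pi^{n-1}_{ij}c_{ij}=P$, so these equalities must be justified along the iterates. This is natural in the operating regime where the dual multipliers $\gamma,\lambda$ returned by the inner loop are strictly positive, since complementary slackness then forces both constraints to be active at $(\bm{w}^{n-1},\bm{\Pi}^{n-1})$. If needed, I would clean up the remaining degenerate cases via a one-line extension of Lemma~\ref{lem-6}: the identity substitutions used in its proof can be replaced by the Lagrangian inequalities $\gamma(\sum p_i w_{ij}d_{ij}-D)\le 0$ and $\lambda(\sum \Pi_{ij}c_{ij}-P)\le 0$, which preserve the direction of the final bound and leave the conclusion intact.
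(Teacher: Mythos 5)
Your proof matches the paper's exactly: you insert the same intermediate point $(\bm{w}^{n-1},\bm{\Pi}^{n-1},\bm{r}^n)$, bound the $(\bm{w},\bm{\Pi})$-update piece via Lemma~\ref{lem-6} and the $\bm{r}$-update piece via Lemma~\ref{lem-7}, and sum the two nonnegative KL-divergence gaps. Your further observation---that Lemma~\ref{lem-6} as stated requires the distortion and perception constraints to be \emph{active} at $(\bm{w}^{n-1},\bm{\Pi}^{n-1})$, which the iterate only guarantees when the inner-loop multipliers $\gamma,\lambda$ are strictly positive---identifies a real hypothesis gap the paper silently steps over, and both of your repairs are sound: complementary slackness handles the generic case, and in the degenerate case the proof of Lemma~\ref{lem-6} survives once the equality substitutions $\gamma D=\gamma\sum_{i,j}p_i w_{ij}d_{ij}$ and $\lambda P=\lambda\sum_{i,j}c_{ij}\Pi_{ij}$ on the non-tilde side are replaced by the one-sided bounds $\gamma D\ge\gamma\sum_{i,j}p_i w_{ij}d_{ij}$, $\lambda P\ge\lambda\sum_{i,j}c_{ij}\Pi_{ij}$, which only strengthen the final lower bound.
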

\begin{proof}
Using the two Lemmas above, and noting that $\bm{r}^n=\tilde{\boldsymbol{r}}(\boldsymbol{w}^{n-1}),$ $\bm{w}^n=\tilde{\boldsymbol{w}}(\boldsymbol{r}^{n}),$ and $\bm{\Pi}^n=\tilde{\boldsymbol{\Pi}}(\boldsymbol{r}^{n})$, we have
\begin{multline*}
g(\boldsymbol{w}^n, \boldsymbol{\Pi}^n, \boldsymbol{r}^n)
 - g(\boldsymbol{w}^{n-1}, \boldsymbol{\Pi}^{\,n-1}, \boldsymbol{r}^{\,n-1})
\\= \big( g(\boldsymbol{w}^n, \boldsymbol{\Pi}^n, \boldsymbol{r}^n)
     - g(\boldsymbol{w}^{n-1}, \boldsymbol{\Pi}^{\,n-1}, \boldsymbol{r}^n) \big)  + \big( g(\boldsymbol{w}^{n-1}, \boldsymbol{\Pi}^{\,n-1}, \boldsymbol{r}^n)
     - g(\boldsymbol{w}^{n-1}, \boldsymbol{\Pi}^{\,n-1}, \boldsymbol{r}^{\,n-1}) \big) \\
\leq -D\!\left(\sum_{i=1}^M p_i \boldsymbol{w}_i^{\,n-1}
        \,\big\|\, \sum_{i=1}^M p_i \boldsymbol{w}_i^{\,n}\right)  - \varepsilon\, D\!\left(\sum_{i=1}^M \boldsymbol{\Pi}_i^{\,n-1}
        \,\big\|\, \sum_{i=1}^M \boldsymbol{\Pi}_i^{\,n}\right) - D(\boldsymbol{r}^{\,n} \,\|\, \boldsymbol{r}^{\,n-1}) \leq 0   
\end{multline*}
\end{proof}

Leveraging the above lemma, we arrive at the following proof of convergence for the outer iteration.

\begin{theorem}
    Suppose $(\bm{w}^*,\bm{\Pi}^*,\bm{r}^*)$ is the global optimizer of \eqref{wasserstein}, and we have
    $$g(\boldsymbol{w}^n, \boldsymbol{\Pi}^n, \boldsymbol{r}^n) -g(\boldsymbol{w}^*, \boldsymbol{\Pi}^*, \boldsymbol{r}^*)  \leq \frac{\varepsilon C}{(1+\varepsilon)^n-1}, $$
    which further shows that the convergence rate of the algorithm is $O(1/n)$.
\end{theorem}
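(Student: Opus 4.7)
The plan is to mirror the convergence argument used for the KL-divergence case, while exploiting the extra entropic term to obtain a sharper \emph{geometric} telescoping instead of the purely arithmetic one. Concretely, I would introduce the shorthand
\begin{equation*}
\Delta_n := g(\bm{w}^n,\bm{\Pi}^n,\bm{r}^n) - g(\bm{w}^*,\bm{\Pi}^*,\bm{r}^*),\qquad R_n := D\!\left(\bm{r}^*\,\|\,\bm{r}^n\right),
\end{equation*}
decompose $\Delta_n$ through the pivot $(\bm{w}^*,\bm{\Pi}^*,\bm{r}^n)$, apply Lemma~\ref{lem-6} to the first piece (legitimate because $(\bm{w}^n,\bm{\Pi}^n) = (\tilde{\bm{w}}(\bm{r}^n),\tilde{\bm{\Pi}}(\bm{r}^n))$ by the outer-iteration rule), and apply Lemma~\ref{lem-7} to the second piece (using $\bm{r}^* = \tilde{\bm{r}}(\bm{w}^*)$, which holds at the optimum by stationarity in $\bm{r}$). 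This yields
\begin{equation*}
\Delta_n \le -D\!\Bigl(\textstyle\sum_i p_i\bm{w}_i^*\,\Big\|\,\textstyle\sum_i p_i\bm{w}_i^n\Bigr) - \varepsilon\, D\!\Bigl(\textstyle\sum_i \bm{\Pi}_i^*\,\Big\|\,\textstyle\sum_i \bm{\Pi}_i^n\Bigr) + D(\bm{r}^*\|\bm{r}^n).
\end{equation*}

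Next I would invoke the coupling constraint \eqref{wasserstein_b}, which forces $\sum_i p_i\bm{w}_i = \sum_i \bm{\Pi}_i$ for every feasible point; combined with the update rule $\bm{r}^{n+1} = \sum_i p_i\bm{w}_i^n$ and the analogous identity $\bm{r}^* = \sum_i p_i\bm{w}_i^* = \sum_i \bm{\Pi}_i^*$ at the optimum, \emph{both} marginal KL divergences collapse to $R_{n+1}$. The inequality thereby reduces to the clean contraction
\begin{equation*}
\Delta_n \le R_n - (1+\varepsilon)\,R_{n+1},
\end{equation*}
which is precisely where the entropy regularization pays off: the amplification factor $1+\varepsilon$ in front of $R_{n+1}$ is what upgrades the rate from plain $O(1/n)$ to the stated exponential-in-denominator form.

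To turn this recursion into the quantitative bound, I would multiply both sides by $(1+\varepsilon)^{n-1}$ and sum over $k=1,\ldots,n$; the right-hand side telescopes to $R_1 - (1+\varepsilon)^n R_{n+1} \le R_1 \triangleq C$. Lemma~\ref{lem-8} guarantees that $\Delta_n$ is non-increasing in $n$, so I may factor $\Delta_n$ out of the weighted sum on the left and divide by $\sum_{k=1}^n(1+\varepsilon)^{k-1} = \frac{(1+\varepsilon)^n-1}{\varepsilon}$, arriving at
\begin{equation*}
\Delta_n \le \frac{\varepsilon C}{(1+\varepsilon)^n - 1},
\end{equation*}
from which $\Delta_n = O(1/n)$ follows via Bernoulli's inequality $(1+\varepsilon)^n - 1 \ge n\varepsilon$.

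The main obstacle I anticipate is the marginal-matching step: it is not obvious a priori that the $\bm{w}$-side and $\bm{\Pi}$-side KL penalties from Lemma~\ref{lem-6} should conspire into a single $(1+\varepsilon)$-multiple of $R_{n+1}$. The identification only becomes clean after one notices that the relaxed Wasserstein--Barycenter constraint \eqref{wasserstein_b} forces both couplings $\operatorname{diag}(\bm{p})\cdot\bm{w}^n$ and $\bm{\Pi}^n$ to share the same second marginal, namely $\bm{r}^{n+1}$. Missing this observation would leave two heterogeneous divergence terms and only the weaker $\sum_k\Delta_k \le C$ bound of the KL case, losing the geometric factor in the denominator.
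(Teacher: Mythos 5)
Your proof is correct and follows essentially the same route as the paper: the same pivot decomposition through $(\bm{w}^*,\bm{\Pi}^*,\bm{r}^n)$, the same use of Lemmas~\ref{lem-6} and \ref{lem-7}, the same identification of both marginal KL terms with $D(\bm{r}^*\|\bm{r}^{n+1})$ via constraint \eqref{wasserstein_b}, the same geometric weighting $(1+\varepsilon)^{k-1}$, and the same appeal to the monotonicity from Lemma~\ref{lem-8}. The only cosmetic difference is the final $O(1/n)$ extraction, where your invocation of Bernoulli's inequality $(1+\varepsilon)^n-1\ge n\varepsilon$ is cleaner than the paper's detour through maximizing $\varepsilon\mapsto \varepsilon C/((1+\varepsilon)^n-1)$, though both establish the same bound $C/n$.
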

\begin{proof}
Noticing $\bm{r}^* = \tilde{\boldsymbol{r}}(\boldsymbol{w}^*)$ and $\bm{r}^{n+1} = \tilde{\boldsymbol{r}}(\boldsymbol{w}^n)$, also under Lemma \ref{lem-6} and \ref{lem-7}, we have
\begin{multline*}
g\!\left(\boldsymbol{w}^{n}, \boldsymbol{\Pi}^{n}, \boldsymbol{r}^{n}\right)
 - g\!\left(\boldsymbol{w}^*, \boldsymbol{\Pi}^*, \boldsymbol{r}^*\right) \\
= \big( g\!\left(\boldsymbol{w}^{n}, \boldsymbol{\Pi}^{n}, \boldsymbol{r}^{n}\right)
     - g\!\left(\boldsymbol{w}^*, \boldsymbol{\Pi}^*, \boldsymbol{r}^{n}\right) \big) 
 + \big( g\!\left(\boldsymbol{w}^*, \boldsymbol{\Pi}^*, \boldsymbol{r}^{n}\right)
     - g\!\left(\boldsymbol{w}^*, \boldsymbol{\Pi}^*, \boldsymbol{r}^*\right) \big) \\
\leq - D\!\left(\sum_i p_i \boldsymbol{w}_i^*
        \,\big\|\, \sum_i p_i \boldsymbol{w}_i^{\,n}\right)
      - \varepsilon\, D\!\left(\sum_i \boldsymbol{\Pi}_i^*
        \,\big\|\, \sum_i \boldsymbol{\Pi}_i^{\,n}\right) 
 + D\!\left(\boldsymbol{r}^* \,\|\, \boldsymbol{r}^{\,n}\right) \\
= - (1+\varepsilon)\, D\!\left(\boldsymbol{r}^* \,\|\, \boldsymbol{r}^{\,n+1}\right)
  + D\!\left(\boldsymbol{r}^* \,\|\, \boldsymbol{r}^{\,n}\right)
\end{multline*}
for $n=1,2, \cdots$. Then, by summing up these inequalities, we have
\begin{multline*}
    \sum_{k=1}^n(1+\varepsilon)^{(k-1)}\left(g\left(\boldsymbol{w}^{k},\bm{\Pi}^{k}, \boldsymbol{r}^{k}\right)-g\left(\boldsymbol{w}^*, \boldsymbol{\Pi}^*, \boldsymbol{r}^*\right)\right) \\ \leq D\left(\boldsymbol{r}^* \| \boldsymbol{r}^{1}\right)-(1+\varepsilon)^n D\left(\boldsymbol{r}^* \| \boldsymbol{r}^{n+1}\right) \leq D\left(\boldsymbol{r}^* \| \boldsymbol{r}^{1}\right) \triangleq C,
\end{multline*}
where $C$ is a constant. 

According to Lemma \ref{lem-8}, we obtain
\begin{multline*}
0 \leq \sum_{k=1}^n(1+\varepsilon)^{(k-1)}\left(g\left(\boldsymbol{w}^{n},\bm{\Pi}^{n}, \boldsymbol{r}^{n}\right)-g\left(\boldsymbol{w}^*, \boldsymbol{\Pi}^*, \boldsymbol{r}^*\right)\right) \\
\leq \sum_{k=1}^n(1+\varepsilon)^{(k-1)}\left(g\left(\boldsymbol{w}^{k},\bm{\Pi}^{k}, \boldsymbol{r}^{k}\right)-g\left(\boldsymbol{w}^*, \boldsymbol{\Pi}^*, \boldsymbol{r}^*\right)\right) \leq C
\end{multline*}
when $n$ is sufficient large we have, 
$$
g\left(\boldsymbol{w}^{n},\bm{\Pi}^{n}, \boldsymbol{r}^{n}\right)-g\left(\boldsymbol{w}^*, \boldsymbol{\Pi}^*, \boldsymbol{r}^*\right) \leq \frac{\varepsilon C}{(1+\varepsilon)^n-1}.
$$
When $\varepsilon \geq 0$, we notice that $\varepsilon^{*} = \arg \max_{\varepsilon} \frac{\varepsilon C}{(1+\varepsilon)^n - 1}$ satisfies $(1+\varepsilon^{*})^n = n\varepsilon^{*}(1+\varepsilon^{*})^{n-1}$, i.e., $\max \frac{\varepsilon C}{(1+\varepsilon)^n - 1}  = \frac{C}{n(1+\varepsilon^{*})^{n-1}} \leq \frac{C}{n}$, thus its convergence rate is $O(1/n)$.
\end{proof}

\section{Numerical Experiment} \label{sec:experiment}

In this section, we numerically study the validity of the \alg algorithm. 

We evaluate the RDP functions under two distinct scenarios characterized by different perception measures. The first scenario considers a binary source with Hamming distortion and TV distance perception, while the second involves a Gaussian source with squared-error distortion and KL divergence/Wasserstein-2 perception \cite{wagner2022rate, niu2025rate, xie2025constrained}. For the binary case, since the source is discrete, we can directly specify its probability distribution \( \bm{p} \). In contrast, for the continuous Gaussian source, we apply a discretization procedure to enable numerical computation. Specifically, we first truncate the Gaussian distribution to a finite interval \([ -S, S ]\), and then partition this interval into \(N\) uniformly spaced grid points \(\{ x_i \}_{i=1}^{N}\) with spacing \(\delta = \frac{2S}{N-1}\).
The discrete approximation of the Gaussian distribution is then defined by the probability mass function:
\begin{equation}
    p_i = F\left(x_i + \frac{\delta}{2} \right) - F\left(x_i - \frac{\delta}{2} \right), \quad i = 1, \dots, N,
\end{equation}
where \(F(x)\) denotes the cumulative distribution function of the Gaussian source.

Unless stated otherwise, we assume a Bernoulli parameter \(p = 0.1\) and \(P=0.02\) for the binary source. For the Gaussian case, the default parameters are \(P=0.2\), \(S = 8\), \(\delta = 0.5\), mean \(\mu = 0\), and standard deviation \(\sigma = 2\). In our \alg method, we fix \(\varepsilon = 0.01\).

As shown in Table~\ref{table: KL}, both algorithms achieve comparable accuracy across various parameter settings, while the \alg algorithm demonstrates markedly higher computational efficiency than its predecessor.
This improvement can be attributed to the reduced number of constraints in our modeling process, which, in turn, decreases the number of dual variables to be solved, and thereby reduces the overall problem size.
Such structural simplification also constitutes a significant advantage over the WBM-RDP model.

\begin{table}[H]
\caption{Comparison between \alg algorithm and IAS algorithm under KL divergence}
\label{table: KL}
\centering
\begin{tabular}{c|c|c|c|c|c}
\hline
\multirow{2}{*}{} & \multirow{2}{*}{D} & \multicolumn{2}{c|}{Computation Time (s)} & \multirow{2}{*}{Speed-up ratio} & \multirow{2}{*}{$\|R_\text{AS}-R_\text{Pro}\|_2$} \\ \cline{3-4}
 & & IAS & \multicolumn{1}{|c|}{\alg} &  &  \\ \hline
\multirow{5}{*}{Gaussian source} 
& 1.00 & 0.1853 & $\bm{0.1005}$ & 1.84 & 4.5587E-14 \\
& 2.00 & 0.1930 & $\bm{0.0914}$ & 2.11 & 9.8765E-14 \\
& 3.00 & 0.1739 & $\bm{0.1251}$ & 1.39 & 1.2345E-13 \\
& 4.00 & 0.1480 & $\bm{0.0802}$ & 1.85 & 7.8901E-14 \\
& 5.00 & 0.1687 & $\bm{0.0968}$ & 1.74 & 3.4567E-14 \\ \hline
\end{tabular}
\end{table}

In the second experiment, we perform a comparative analysis between the newly \alg algorithm and the original IAS algorithm under two configurations: binary source and Gaussian source. As demonstrated in Table \ref{table: Wasserstein}, We observe results similar to those in Table \ref{table: KL}.

\begin{table}[H]
\caption{Comparison between \alg algorithm and IAS algorithm under Wasserstein metric}
\label{table: Wasserstein}
\centering
\begin{tabular}{c|c|c|c|c|c}
\hline
\multirow{2}{*}{} & \multirow{2}{*}{D} & \multicolumn{2}{c|}{Computation Time (s)} & \multirow{2}{*}{Speed-up ratio} & \multirow{2}{*}{$\|R_\text{AS}-R_\text{Pro}\|_2$} \\ \cline{3-4}
 & & IAS & \multicolumn{1}{|c|}{\alg} &  &  \\ \hline
\multirow{5}{*}{Binary source}   
& 0.03 & 0.0215 & $\bm{0.0016}$ & 13.44 & 2.9332E-13 \\
& 0.06 & 0.0196 & $\bm{0.0045}$ & 4.36  & 4.5464E-14 \\
& 0.09 & 0.0969 & $\bm{0.0255}$ & 3.80  & 6.6842E-14 \\
& 0.12 & 0.1799 & $\bm{0.0568}$ & 3.17  & 5.9669E-14 \\
& 0.15 & 0.0139 & $\bm{0.0033}$ & 4.21  & 6.1185E-14 \\ \hline
\multirow{5}{*}{Gaussian source} 
& 1.00 & 0.5650 & $\bm{0.3460}$ & 1.63 & 6.2617E-14 \\
& 2.00 & 0.6031 & $\bm{0.3682}$ & 1.64 & 8.2656E-14 \\
& 3.00 & 0.4170 & $\bm{0.3506}$ & 1.19 & 1.0242E-12 \\
& 4.00 & 1.2229 & $\bm{0.4342}$ & 2.82 & 7.0880E-13 \\
& 5.00 & 0.5296 & $\bm{0.2426}$ & 2.18 & 1.6175E-13 \\ \hline
\end{tabular}
\end{table}

\section{Conclusion} \label{sec:conclusion}
In this work, we present a fast algorithm with convergence $O(1/n)$ to compute the information rate distortion perception functions based on our previous Improved-AS algorithm. By adapting the equational relationship of constraints in the original WBM-RDP, we construct the \alg algorithm that is shown to have a $O(1/n)$ convergence rate. Numerical experiments show that the \alg algorithm performs with high accuracy and efficiency.



%
%

\bibliographystyle{spmpsci}      
\bibliography{ref.bib}   


\end{document}